\theoremstyle{plain}
\newcommand{\gamea}{\mathcal{M}}
\newcommand{\gameb}{\mathcal{G}}
\newcommand{\tm}{\mathcal{T}}
\newcommand{\pa}{\textup{Verifier}}
\newcommand{\pb}{\textup{Falsifier}}
\newcommand{\agt}{\Pi}
\newcommand{\prop}{\textup{AP}}
\newcommand{\play}{\textup{Play}}
\newcommand{\hist}{\textup{Hist}}
\newcommand{\strat}{\textup{Strat}}
\newcommand{\last}{\textup{last}}
\newcommand{\dirleft}{\textup{Left}}
\newcommand{\dirright}{\textup{Right}}
\newcommand{\ints}{\mathbb{Z}}
\newcommand{\nats}{\mathbb{N}}
\newcommand{\atlsa}{\Phi}
\newcommand{\atlsb}{\Psi}
\newcommand{\atla}{\varphi}
\newcommand{\atlb}{\psi}
\newcommand{\atlx}{\textbf{X}}
\newcommand{\atlg}{\textbf{G}}
\newcommand{\atlu}{\textbf{U}}
\newcommand{\atlf}{\textbf{F}}
\newcommand{\LTL}{\textup{LTL}}
\newcommand{\CTL}{\textup{CTL}}
\newcommand{\CTLs}{\textup{CTL}^*}
\newcommand{\ATL}{\textup{ATL}}
\newcommand{\ATLs}{\textup{ATL}^*}
\newcommand{\qATL}{\textup{QATL}}
\newcommand{\qATLs}{\textup{QATL}^*}
\newcommand{\auta}{\mathcal{A}}
\newcommand{\autb}{\mathcal{B}}
\newcommand{\traces}{\textup{Traces}}
\newcommand{\pred}{\textup{Pre}}
\newcommand{\pspace}{\textsc{PSpace}}
\newcommand{\expspace}{\textsc{ExpSpace}}
\newcommand{\exptime}{\textsc{ExpTime}}
\newcommand{\twoexpspace}{\textsc{2ExpSpace}}
\newcommand{\twoexptime}{\textsc{2ExpTime}}
\newcommand{\threeexptime}{\textsc{3ExpTime}}
\newtheorem{definition}{Definition}
\newtheorem{theorem}[definition]{Theorem}
\newtheorem{lemma}[definition]{Lemma}
\newtheorem{corollary}[definition]{Corollary}
\newtheorem{proposition}[definition]{Proposition}
\newsavebox{\@brx}
\newcommand{\llangle}[1][]{\savebox{\@brx}{\(\m@th{#1\langle}\)}%
  \mathopen{\copy\@brx\kern-0.5\wd\@brx\usebox{\@brx}}}
\newcommand{\rrangle}[1][]{\savebox{\@brx}{\(\m@th{#1\rangle}\)}%
  \mathclose{\copy\@brx\kern-0.5\wd\@brx\usebox{\@brx}}}
\newcommand{\coal}[1]{\llangle #1 \rrangle}
\title{Model-checking Quantitative Alternating-time Temporal Logic on One-counter Game Models}
\author{Steen Vester}
\date{}
\affil{Technical University of Denmark, Kgs. Lyngby, Denmark}
\begin{document}

\maketitle

\begin{abstract}

We consider quantitative extensions of the alternating-time temporal logics $\ATL/\ATLs$ called quantitative alternating-time temporal logics ($\qATL/\qATLs$) in which the value of a counter can be compared to constants using equality, inequality and modulo constraints. We interpret these logics in one-counter game models which are infinite duration games played on finite control graphs where each transition can increase or decrease the value of an unbounded counter. That is, the state-space of these games are, generally, infinite. We consider the model-checking problem of the logics $\qATL$ and $\qATLs$ on one-counter game models with VASS semantics for which we develop algorithms and provide matching lower bounds. Our algorithms are based on reductions of the model-checking problems to model-checking games. This approach makes it quite simple for us to deal with extensions of the logical languages as well as the infinite state spaces. The framework generalizes on one hand qualitative problems such as $\ATL/\ATLs$ model-checking of finite-state systems, model-checking of the branching-time temporal logics $\CTL$ and $\CTLs$ on one-counter processes and the realizability problem of LTL specifications. On the other hand the model-checking problem for $\qATL/\qATLs$ generalizes quantitative problems such as the fixed-initial credit problem for energy games (in the case of $\qATL$) and energy parity games (in the case of $\qATLs$).  Our results are positive as we show that the generalizations are not too costly with respect to complexity. As a byproduct we obtain new results on the complexity of model-checking $\CTLs$ in one-counter processes and show that deciding the winner in one-counter games with $\LTL$ objectives is $\twoexpspace$-complete.

\end{abstract}

\section{Introduction}

The alternating-time temporal logics $\ATL$ and $\ATLs$ \cite{AHK02} are used to specify temporal properties of systems in which several entities interact. They generalize the widely applied linear-time temporal logic $\LTL$ \cite{Pnu77} and computation tree logics $\CTL$ \cite{CE81} and $\CTLs$ \cite{EH86} to a multi-agent setting. Indeed, it is possible to specify and reason about what different coalitions of agents can make sure to achieve. The model-checking problem for alternating-time temporal logics subsumes the realizability problem for $\LTL$ \cite{PR89a,PR89b} which is the problem of deciding whether there exists a program satisfying a given $\LTL$ specification no matter how the environment behaves. This is closely related to the synthesis problem which consists of generating a program meeting such a specification. Properties in these logics are inherently qualitative and the model-checking problem for alternating-time temporal logics has primarily been treated in finite-state systems. However, in \cite{BG13} extensions of $\ATL$ and $\ATLs$ to the quantitative alternating-time temporal logics $\qATL$ and $\qATLs$ have been introduced. The purpose is to make the languages capable of expressing quantitative properties of multi-agent scenarios as well as deal with infinite-state systems. These are represented using unbounded counters in addition to a finite set of control states. Naturally, this leads to undecidability in many cases since already deciding the winner in a reachability game on a two-dimensional vector addition system with states (VASS) can already simulate the halting problem of a two-counter machine \cite{BJK10}. In order to regain decidability we focus on the subproblem of a single unbounded counter. This is a significant restriction from the multi-dimensional case, but it still lets us express many interesting properties of infinite-state multi-agent systems. For instance, the model-checking problem includes problems such as energy games \cite{BFLMS08} and energy parity games \cite{CD12} in which a system respectively needs to keep an energy level positive and needs to keep an energy level positive while satisfying a parity condition. These are expressible in $\qATL$ and $\qATLs$ as $\coal{\texttt{Sys}} \atlg (r > 0)$ and $\coal{\texttt{Sys}} (\atlg (r > 0) \wedge \atla_{\textup{parity}})$ respectively where $r$ is used to denote the current value of the counter. It can be compared to constants using relations in $\{<, \leq, = , \geq, > \}$. $\atla_{\textup{parity}}$ is a parity condition expressed as an $\LTL$ formula. It is quite natural to model systems with a resource (e.g. battery level, time, money) using a counter where production and consumption correspond to increasing and decreasing the counter respectively.

Let us give another example of a $\qATL$ specification. Consider the game in Figure \ref{fig:vending} modelling the interaction between the controller of a vending machine and an environment. The environment controls the rectangular states and the controller controls the circular state. Initially, the environment can insert a coin or request coffee. Upon either input the controller can decrease or increase the balance, dispense coffee or release control to the environment again.

\begin{figure}[here]
\begin{center}
\begin{tikzpicture}

\tikzstyle{every node}=[ellipse, draw=black, fill=none,
                        inner sep=5pt, minimum width=15pt, minimum height=15pt]

\draw (1,5) node [rectangle] (s0)	{$\bullet$};

\draw (3,5) node [rectangle] (s10)	{Insert coin};
\draw (3,3.5) node [rectangle] (s11)	{Request coffee};

\draw (5,5) node (s2)	{};

\draw (8,6.5) node [rectangle] (s30)	{Decrease};
\draw (8,5) node [rectangle] (s31)	{Increase};
\draw (8,3.5) node [rectangle] (s32)	{Dispense};

\draw (5,6.5) node [rectangle] (s33)	{Release};

\path[->] (s0) edge node [left, draw=none] {} (s10);
\path[->] (s0) edge node [left, draw=none] {} (s11);
\path[->] (s10) edge (s2);
\path[->] (s11) edge (s2);

\path[->] (s2) edge [bend left = 10] node [above left, draw=none] {-1} (s30);
\path[->] (s2) edge [bend left = 10] node [above = -5, draw=none] {+1} (s31);
\path[->] (s2) edge [bend left = 10] (s32);
\path[->] (s2) edge (s33);

\path[->] (s30) edge (s2);
\path[->] (s31) edge (s2);
\path[->] (s32) edge (s2);
\path[->] (s33) edge [bend right] (s0);

\end{tikzpicture}
\end{center}
\caption{Model of interaction between a vending machine controller and an environment.}
\label{fig:vending}
\end{figure}
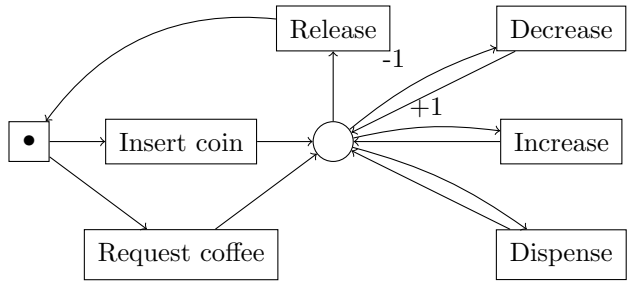

Some examples of specifications in $\qATLs$ using this model are

\begin{itemize}
 
 \item $\coal{\{\texttt{ctrl} \} } \atlg (\textup{Request} \wedge (r < 3) \rightarrow \atlx \atlx \textup{Release})$: The controller can make sure that control is released immediately whenever coffee is requested and the balance is less than 3.  
 
 \item $\coal{\{\texttt{ctrl} \} } \atlg (\textup{Request} \wedge (r \geq 3) \rightarrow \atlf \textup{Dispense})$: The controller can make sure that whenever coffee is requested and the balance is at least 3 then eventually a cup of coffee is dispensed.
\end{itemize}

\subsection{Contribution}

The contribution of this paper is to present algorithms and complexity results for model-checking $\qATL$ and $\qATLs$ in one-counter game models with one-dimensional VASS semantics, meaning that transitions that would make the counter go below zero are disabled. The complexity is investigated both in terms of whether only edge weights in $\{-1,0,+1\}$ can be used or if we allow any integer weights encoded in binary. We also distinguish between data complexity and combined complexity. In data complexity, the formula is assumed to be fixed whereas in combined complexity both the formula and the game are parameters. We characterize the complexity of the model-checking problems that arise from these distinctions for both $\qATL$ and $\qATLs$. In most of the cases the complexity results are quite satisfying compared with other results from the litterature. As a byproduct we also obtain precise data complexity for model-checking $\CTLs$ in one-counter processes (OCPs) and succinct one-counter processes (SOCPs). In addition, we show that the complexity of deciding the winner in a one-counter game with $\LTL$ objectives is $\twoexpspace$-complete. The complexity results encountered range from $\pspace$ to $\twoexpspace$, an overview of the results can be seen in Section \ref{sec:concluding}. The algorithms are based on model-checking games which makes it simple for us to handle the extensions of the logics considered as well as dealing with infinite state spaces and nesting of strategic operators.

\subsection{Related work}

The realizability problem for LTL was shown to be $\twoexptime$-complete in \cite{PR89a,PR89b}. As this problem is subsumed in $\qATLs$ model-checking this gives us an immediate $\twoexptime$ lower bound for the combined complexity of $\qATLs$ model-checking. The results for realizability of LTL specifications are generalized to quantitative objectives in \cite{BBFR13} where LTL objectives combined with a mean-payoff objective or an energy objective are considered. However, the semantics in their setting differs from ours in the way the counter value is handled when it gets close to 0. In our setting VASS semantics is used which is not the case in their setting. Our setting is equivalent to one-dimensional VASS games considered in e.g. \cite{BJK10}. Deciding the winner in games played on pushdown processes with parity objectives and LTL objectives were shown to be $\exptime$-complete and $\threeexptime$-complete in \cite{Wal01} and \cite{LMS04} respectively. Their setting is the same as ours except that in our setting a singleton stack alphabet is used to obtain one-counter games. In \cite{Ser06} it was shown that deciding the winner in one-counter parity games is in $\pspace$. It follows from \cite{BJK10} that this problem is $\pspace$-complete since selective zero-reachability in 1-dimensional VASS games is $\pspace$-hard. The approaches of module checking \cite{KVW01} and in particular pushdown module checking \cite{BMP05} are related to our setting and have given inspiration for our $\twoexpspace$-hardness proof of model-checking $\qATLs$. To compare, pushdown module checking of $\CTL$ and $\CTLs$ are $\twoexptime$-complete and $\threeexptime$-complete respectively. Our problems generalize several model-checking problems of branching-time logics in one-counter processes and are related to model-checking in pushdown processes as well. Model-checking of $\CTLs$ on pushdown processes has been shown decidable \cite{FWW97}, to be in $\twoexptime$ \cite{EKS01} and to be $\twoexptime$-hard \cite{Boz07}. On the other hand, model-checking $\CTL$ in succinct one-counter processes is $\expspace$-complete \cite{GHOW10}. Other related lines of research includes model-checking of Presburger LTL \cite{DG09} where counter constraints similar to (and more general than) ours are considered in the linear-time paradigm.

\section{Preliminaries}

A \emph{one-counter game} (OCG) is a particular kind of finitely representable infinite-state turn-based game. Such a game is represented by a finite game graph where each transition is labelled with an integer value from the set $\{-1,0,1\}$ as well as a counter that can hold any non-negative value. The idea is that when a transition labelled $v$ is taken when the counter value is $c$, the counter value changes to $c + v$. We require that transitions are only applicable when $c + v \ge 0$ since the counter cannot hold a negative value. When $r + v < 0$ we also say that the transition is disabled.

\begin{definition}

A one-counter game is a tuple $\gameb = (S, \agt, (S_j)_{j \in \agt}, R)$ where

\begin{itemize}

\item $S$ is a finite set of states

\item $\agt$ is a finite set of players

\item $S = \bigcup_{j \in \agt} S_j$ and $S_i \cap S_j = \emptyset$ for all $i,j \in \agt$ such that $i \neq j$

\item $R \subseteq S \times \{-1,0,1\} \times S$ is the transition relation

\end{itemize}

\end{definition}

An OCG is played by placing a token in an initial state $s_0$ and then moving the token between states for an infinite number of rounds. The transitions must respect the transition relation and the intuition is that for each $j \in \agt$ player $j$ controls the successor state when the token is placed on a state in $S_j$. At a given point in the game, the current counter value is given by the sum of the initial value $v_0 \in \nats$ and all the edge weights encountered so far. If a transition would make the current counter value decrease below 0 then the transition is disabled. More formally, an element $c \in S \times \nats$ is called a configuration of the game. We denote by $(S \times \nats)^*, (S \times \nats)^+$ and $(S \times \nats)^\omega$ the set of finite sequences, the set of non-empty finite sequences and the set of infinite sequences of configurations respectively. For a sequence $\rho = c_0 c_1 ...$ we define $\rho_i = c_i$, $\rho_{\leq i} = c_0 ... c_i$ and $\rho_{\ge i} = c_i c_{i+1} ...$. When $\rho$ is finite, i.e. $\rho = c_0 ... c_\ell$ we write $\last(\rho) = c_\ell$ and $|\rho| = \ell$. A play is a maximal sequence $\rho = (s_0,v_0) (s_1,v_1) ... $ of configurations such that for all $i \geq 0$ we have $(s_i,v_{i+1}-v_i,s_{i+1}) \in R$ and $v_i \geq 0$. A history is a proper prefix of a play. The set of plays and histories in an OCG $\gameb$ are denoted by $\play_\gameb$ and $\hist_\gameb$ respectively (the subscript may be omitted when it is clear from the context). The set of plays and histories with initial configuration $c_0$ are denoted $\play_\gameb(c_0)$ and $\hist_\gameb(c_0)$ respectively. A strategy for player $j \in \agt$ in $\gameb$ is a partial function $\sigma: \hist_\gameb \rightarrow S \times \nats$ defined for all histories $h = (s_0,v_0) ... (s_\ell, v_\ell) \in \hist_\gameb$ such that $s_\ell \in S_j$ with the requirement that if $\sigma(h) = (s,v)$ then $(s_\ell, v - v_\ell, s) \in R$. A play (resp. history) $\rho = c_0 c_1 ...$ (resp. $\rho = c_0 ... c_\ell$) is compatible with a strategy $\sigma_j$ for player $j \in \agt$ if $\sigma_j(\rho_{\leq i}) = \rho_{i+1}$ for all $i \ge 0$ (resp. $0 \leq i < \ell$) such that $\rho_i \in S_j \times \nats$. We denote by $\strat^j_\gameb$ the set of strategies of player $j$ in $\gameb$. For a coalition $A \subseteq \agt$ of players a collective strategy $\sigma = (\sigma_j)_{j \in A}$ is a tuple of strategies, one for each player in $A$. We denote by $\strat^A_\gameb$ the set of collective strategies of coalition $A$. For an initial configuration $c_0$ and collective strategy $\sigma = (\sigma_j)_{j \in A}$ of coalition $A$ we denote by $\play(c_0, \sigma)$ the set of plays with initial configuration $c_0$ that are compatible with $\sigma_j$ for every $j \in A$.

We extend one-counter games such that arbitrary integer weights are allowed and such that transitions are still disabled if they would make the counter go below zero. Such games are called succinct one-counter games (SOCGs). We suppose that weights are given in binary. The special cases of OCGs and SOCGs where $\agt$ is a singleton are called one-counter processes (OCPs) and succinct one-counter processes (SOCPs) respectively. A game model $\gamea = (\gameb, \prop, L)$ consists of a (one-counter or succinct one-counter) game $\gameb$, a finite set $\prop$ of atomic proposition symbols and a labelling $L: S \mapsto 2^{\prop}$ of the states $S$ of the game $\gameb$ with atomic propositions. We abbreviate one-counter game models and succinct one-counter game models by OCGM and SOCGM respectively.

By a one-counter parity game we mean the particular kind of one-counter game model where there are two players I and II and the set of propositions is a finite subset of the natural numbers, called colors. Further, every control state is labelled with exactly one color. In such a game, player I wins if the least color occuring infinitely often is even. Otherwise player II wins. We assume that the counter value is $0$ initially and that there is a designated initial state in a one-counter parity game. It was shown in \cite{Ser06} that the winner can be determined in a one-counter parity game in polynomial space by a reduction to the emptiness problem for alternating two-way parity automata \cite{Var98}.

\begin{proposition}
\label{prop:ocg_parity}

Determining the winner in one-counter parity games is in $\pspace$.

\end{proposition}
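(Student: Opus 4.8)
The plan is to reconstruct the argument of \cite{Ser06}, which reduces determining the winner of a one-counter parity game to an emptiness test for an alternating two-way parity automaton and then invokes the decidability and complexity results of \cite{Var98}. First I would encode the (infinite) configuration space as a linear structure that a two-way device can traverse: a configuration $(q,n)$ is represented by the control state $q$ together with the position $n$ of a read head on the one-letter infinite word $a^\omega$ equipped with a left endmarker. Incrementing the counter becomes a move of the head to the right, decrementing becomes a move to the left, and the fact that a decrement is disabled when the counter is $0$ is modelled faithfully by the head being blocked at the endmarker. Since the control graph is finite, the resulting automaton has only polynomially many states and polynomially many priorities.

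Next I would let the alternation of the automaton simulate the two players. Transitions leaving a state controlled by player I become existential (disjunctive) choices of the automaton, transitions leaving a state controlled by player II become universal (conjunctive) choices, and the colour labelling of the control states becomes the priority assignment of the automaton. Under this correspondence a run of the automaton on the essentially unique relevant input is precisely a strategy tree for player I in the game, and the parity acceptance condition of the automaton — along every branch the least priority occurring infinitely often is even — is exactly the winning condition for player I. Hence player I wins from the designated initial state with counter value $0$ if and only if the language of the constructed alternating two-way parity automaton is non-empty.

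It then remains to bound the cost of this emptiness check, and this is the step I expect to be the main obstacle. The generic bound for emptiness of alternating two-way parity automata coming from \cite{Var98} is exponential in the number of states and priorities, which would only give $\exptime$. The point that has to be argued is that for the very restricted inputs arising here — a single infinite line over a one-letter alphabet rather than arbitrary branching trees — the analysis can be carried out in polynomial space. Roughly, an accepting run is determined by its \emph{profile} at each counter level, i.e. the relation between the states in which the run crosses a level moving upwards and those in which it later returns downwards through the same level; since every level carries the same letter, the existence of a globally consistent family of such profiles over $\nats$ that is winning for the parity condition can be decided by a polynomial-space procedure. Combining the polynomial-size reduction of Steps 1--2 with this polynomial-space emptiness test yields the claimed $\pspace$ upper bound; as this is exactly the content of \cite{Ser06}, the statement follows.
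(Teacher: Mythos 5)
Your proposal takes essentially the same route as the paper, which does not give a fresh proof but simply invokes \cite{Ser06}: the winner problem is reduced to an emptiness/membership question for alternating two-way parity automata in the sense of \cite{Var98}, exactly as in your reconstruction (counter value as head position on $a^\omega$ with an endmarker, alternation simulating the two players, parity condition carried over, and a one-letter-specific polynomial-space analysis replacing the generic exponential-time bound). Since the one step you leave schematic---the $\pspace$ emptiness test via crossing profiles---is precisely what you, like the paper, attribute to \cite{Ser06}, your account matches the paper's treatment.
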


\section{Quantitative Alternating-time temporal logic}

We consider fragments of the quantitative alternating-time temporal logics $\qATL$ and $\qATLs$ introduced in \cite{BG13} interpreted over one-counter game models. The logics extend the standard $\ATL$ and $\ATLs$ \cite{AHK02} with atomic formulas of the form $r \bowtie c$ where $c \in \ints$ and $\bowtie \in \{\leq, <, =, >, \geq, \equiv_k \}$ with $k \in \nats$. They are interpreted in configurations of the game such that $r \leq 5$ is true if the current value of the counter is at most $5$ and $r = 0$ is true if the current value of the counter is $0$. $r \equiv_4 3$ means that the current value of the counter is equivalent to 3 modulo 4. More formally, the formulas of $\qATLs$ are defined with respect to a set $\prop$ of proposition symbols and a finite set $\agt$ of agents. They are constructed using the following grammar

$$\atlsa ::= p \mid r \bowtie c \mid \neg \atlsa_1 \mid \atlsa_1 \vee \atlsa_2 \mid \atlx \atlsa_1 \mid \atlsa_1 \atlu \atlsa_2 \mid \coal{A} \atlsa_1$$

where $p \in \prop$, $c \in \ints$, $\bowtie \in \{\leq, <, =, >, \geq, \equiv_k \}$ with $k \in \nats$, $A \subseteq \agt$ and $\atlsa_1, \atlsa_2$ are $\qATLs$ formulas. We define the syntactic fragment $\qATL$ of $\qATLs$ by the grammar

$$\atla ::= p \mid r \bowtie c \mid \neg \atla_1 \mid \atla_1 \vee \atla_2 \mid \coal{A} \atlx \atla_1 \mid \coal{A} \atlg \atla_1 \mid \coal{A} \atla_1 \atlu \atla_2 $$

where $p \in \prop$, $c \in \ints$, $\bowtie \in \{\leq, <, =, >, \geq, \equiv_k \}$ with $k \in \nats$, $A \subseteq \agt$ and $\atla_1, \atla_2$ are $\qATL$ formulas. Formulas of the form $r \bowtie c$ are called counter constraints.

We interpret formulas of $\qATL$ and $\qATLs$ in OCGMs. In standard $\ATLs$ we have state formulas and path formulas which are interpreted in states and plays respectively. For $\qATL$ and $\qATLs$ we also need the value of the counter to interpret state formulas. Note that the value of the counter is already present in a play. The semantics of a formula is defined with respect to a given OCGM $\gamea = (S, \agt, (S_j)_{j \in \agt}, R, \prop, L)$ inductively on the structure of the formula. For all states $s \in S$, plays $\rho \in \play_\gamea$, $p \in \prop$, $c,i \in \ints$, $A \subseteq \agt$, $\qATLs$ state formulas $\atlsa_1, \atlsa_2$ and $\qATLs$ path formulas $\atlsb_1, \atlsb_2$ let the satisfaction relation $\models$ be given by
\\

\begin{tabular}{l l l l }

$\gamea, s, i$ & $\models p$ & iff & $p \in L(s)$\\
$\gamea, s, i$ & $\models r \bowtie c$ & iff & $i \bowtie c$\\
$\gamea, s, i$ & $\models \neg \atlsa_1$ & iff & $\gamea, s, i \not\models \atlsa_1$\\
$\gamea, s, i$ & $\models \atlsa_1 \vee \atlsa_2$ & iff & $\gamea, s, i \models \atlsa_1$ or $\gamea, s, i \models \atlsa_2$\\
$\gamea, s, i$ & $\models \coal{A} \atlsb_1$ & iff & $\exists \sigma \in \strat^A_\gamea . \forall \pi \in \play_\gamea((s,i), \sigma) . \gamea, \pi \models \atlsb_1$\\
 & & & \\
$\gamea, \rho$ & $\models \atlsa_1$ & iff & $\gamea, \rho_0 \models \atlsa_1$\\
$\gamea, \rho$ & $\models \neg \atlsb_1$ & iff & $\gamea, \rho \not\models \atlsb_1$\\
$\gamea, \rho$ & $\models \atlsb_1 \vee \atlsb_2$ & iff & $\gamea, \rho \models \atlsb_1$ or $\gamea, \rho \models \atlsb_2$\\
$\gamea, \rho$ & $\models \atlx \atlsb_1$ & iff & $\gamea, \rho_{\ge 1} \models \atlsb_1$\\
$\gamea, \rho$ & $\models \atlsb_1 \atlu \atlsb_2$ & iff & $\exists k \ge 0 . \gamea, \rho_{\ge k} \models \atlsb_2$ and $\forall 0 \leq i < k . \gamea, \rho_{\ge i} \models \atlsb_1$ 

\end{tabular}
\\

The definition of the semantics is extended in the natural way to SOCGMs.

In this paper we focus on the model-checking problem. That is to decide, given an OCGM/SOCGM $\gamea$, a state $s$ in $\gamea$, a natural number $i$ and a $\qATL$/$\qATLs$ formula $\atla$ whether $\gamea, s, i \models \atla$. When doing model-checking we assume that states are only labelled with atomic propositions that occur in the formula $\atla$ as well as the special propositions $\top$ and $\bot$ that are true in all states and false in all states respectively. This is done to ensure that the input is finite. When measuring the complexity of the model-checking problem we distinguish between data complexity and combined complexity. For data complexity, the formula $\atla$ is assumed to be fixed and thus, the complexity only depends on the model. For combined complexity both the formula and game are assumed to be parameters. When model-checking OCGMs, the initial counter value $i$ is assumed to be input in unary and for SOCGMs, the initial counter value $i$ is assumed to be input in binary.

\section{Model-checking $\qATL$}

When model-checking $\ATL$ and $\ATLs$ in finite-state systems, the standard approach is to process the state subformulas from the innermost to the outermost, at each step labelling all states where the subformula is true. This approach does not work directly in our setting since we have an infinite number of configurations. We therefore take a different route and develop a model-checking game in which we can avoid explicitly labelling the configurations in which a subformula is true. This approach also allows us to handle the counter constraints in a natural way.

\subsection{A model-checking game for $\qATL$}

We convert the model-checking problem asking whether $\gamea, s_0, i \models \atla$ for a $\qATL$ formula $\atla$ in a configuration $(s_0,i)$ of an OCGM $\gamea = (S, \agt, (S_j)_{j \in \agt}, R, \prop, L)$ to a model-checking game $\gameb_{\gamea, s_0, i}(\atla)$ between two players $\pa$ and $\pb$ that are trying to respectively verify and falsify the formula. The construction is done so $\pa$ has a winning strategy in $\gameb_{\gamea, s_0, i}(\atla)$ if and only if $\gamea, s_0, i \models \atla$. The model-checking game can be constructed in polynomial time and is an OCG with a parity winning condition. According to Proposition \ref{prop:ocg_parity} determining the winner in such a game can be done in $\pspace$.

The construction is done inductively on the structure of $\atla$. For a given $\qATL$ formula, a given OCGM $\gamea$ and a given state $s$ in $\gamea$ we define a characteristic OCG $\gameb_{\gamea, s}(\atla)$. Note that the initial counter value is not present in the construction yet. There are a number of different cases to consider. We start with the base cases where $\atla$ is either a proposition $p$ or a formula of the form $r \bowtie c$ and then move on to the inductive cases. The circle states are controlled by $\pa$ and square states are controlled by $\pb$. $\pa$ wins the game if the least color that appears infinitely often during the play is even, otherwise $\pb$ wins the game. The states are labelled with colors whereas edges are labelled with counter updates.

$\gameb_{\gamea, s}(p): $ There are two cases. When $p \in L(s)$ and when $p \not\in L(s)$. The two resulting games are illustrated in Figure \ref{fig:casep} to the left and right respectively.

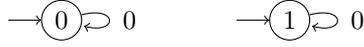
\begin{figure}[here]
\begin{center}
\begin{tikzpicture}

\tikzstyle{every node}=[ellipse, draw=black, fill=none,
                        inner sep=0pt, minimum width=15pt, minimum height=15pt]

\draw (3,6)     node (s0)	{0};

\draw (6,6)     node (s1)	{1};

\path[->] (s0) edge [loop right] node [right, draw=none] {0} (s0);
\path[->] (2.3,6) edge (s0);

\path[->] (s1) edge [loop right] node [right, draw=none] {0} (s1);
\path[->] (5.3,6) edge (s1);

\end{tikzpicture}
\end{center}
\caption{$\gameb_{\gamea,s}(p)$. To the left is the case where $p \in L(s)$ and to the right is the case where $p \not \in L(s)$}
\label{fig:casep}

\end{figure}
$\gameb_{\gamea, s}(r \bowtie c): $ Using negation and conjunction we can define $(r = c) \equiv (r \leq c \wedge \neg (r < c))$, $(r > c) \equiv (\neg (r \leq c))$ and $(r \ge c) \equiv (\neg (r < c))$ and therefore only need to construct games for the cases $r < c$, $r \leq c$ and $r \equiv_k c$. The three cases are shown in Figure \ref{fig:caserc}.

\begin{figure}[here]
\begin{center}
\begin{tikzpicture}

\tikzstyle{every node}=[ellipse, draw=black, fill=none,
                        inner sep=0pt, minimum width=15pt, minimum height=15pt]

\draw (2,8)     node [rectangle, label=below: $v_0$ ] (s00)	{0};
\draw (4,8)     node [rectangle, label=below: $v_1$ ] (s01)	{0};
\draw (6,8)     node [draw = none] (s02)	{...};
\draw (8,8)     node [rectangle, label=below: $v_{c-1}$ ] (s03)	{0};
\draw (10,8)     node [rectangle, label=below: $v_c$ ] (s04)	{1};

\draw (2,6)     node [rectangle, label=below: $w_0$ ] (s10)	{0};
\draw (4,6)     node [rectangle, label=below: $w_1$ ] (s11)	{0};
\draw (6,6)     node [draw = none] (s12)	{...};
\draw (8,6)     node [rectangle, label=below: $w_c$ ] (s13)	{0};
\draw (10,6)     node [rectangle, label=below: $w_{c+1}$ ] (s14)	{1};

\draw (0,4)     node [rectangle, label=below: $u_{k-1}$ ] (s20)	{0};
\draw (2,4)     node [label=below: $u_{k-2}$ ] (s21)	{1};
\draw (4,4)     node [draw = none] (s22)	{...};
\draw (6,4)     node [label=below: $u_{c-1 (\textup{mod } k)}$ ] (s23)	{1};
\draw (8,4)     node [draw = none] (s24)	{...};
\draw (10,4)     node [label=below: $u_{1}$ ] (s25)	{1};
\draw (12,4)     node [label=below: $u_{0}$ ] (s26)	{1};

\path[->] (1.3,8) edge (s00);
\path[->] (s00) edge [loop above] node [above, draw=none] {0} (s00);
\path[->] (s01) edge [loop above] node [above, draw=none] {0} (s01);
\path[->] (s03) edge [loop above] node [above, draw=none] {0} (s03);
\path[->] (s04) edge [loop above] node [above, draw=none] {0} (s04);

\path[->] (s00) edge node [above, draw=none] {-1} (s01);
\path[->] (s01) edge node [above, draw=none] {-1} (s02);
\path[->] (s02) edge node [above, draw=none] {-1} (s03);
\path[->] (s03) edge node [above, draw=none] {-1} (s04);

\path[->] (1.3,6) edge (s10);
\path[->] (s10) edge [loop above] node [above, draw=none] {0} (s10);
\path[->] (s11) edge [loop above] node [above, draw=none] {0} (s11);
\path[->] (s13) edge [loop above] node [above, draw=none] {0} (s13);
\path[->] (s14) edge [loop above] node [above, draw=none] {0} (s14);

\path[->] (s10) edge node [above, draw=none] {-1} (s11);
\path[->] (s11) edge node [above, draw=none] {-1} (s12);
\path[->] (s12) edge node [above, draw=none] {-1} (s13);
\path[->] (s13) edge node [above, draw=none] {-1} (s14);

\path[->] (5,5) edge (s23);
\path[->] (s20) edge [loop above] node [above, draw=none] {0} (s20);
\path[->] (s21) edge [loop above] node [above, draw=none] {0} (s21);
\path[->] (s23) edge [loop above] node [above, draw=none] {0} (s23);
\path[->] (s25) edge [loop above] node [above, draw=none] {0} (s25);
\path[->] (s26) edge [loop above] node [above, draw=none] {0} (s26);

\path[->] (s20) edge node [above, draw=none] {-1} (s21);
\path[->] (s21) edge node [above, draw=none] {-1} (s22);
\path[->] (s22) edge node [above, draw=none] {-1} (s23);
\path[->] (s23) edge node [above, draw=none] {-1} (s24);
\path[->] (s24) edge node [above, draw=none] {-1} (s25);
\path[->] (s25) edge node [above, draw=none] {-1} (s26);
\path[->] (s26) edge [bend left = 20] node [above, draw=none] {-1} (s20);

\end{tikzpicture}
\end{center}
\caption{$\gameb_{\gamea,s}(r < c)$ on top, $\gameb_{\gamea,s}(r \leq c)$ in the middle and $\gameb_{\gamea, s}(r \equiv_k c)$ at the bottom.}
\label{fig:caserc}

\end{figure}

$\gameb_{\gamea, s}(\atla_1 \vee \atla_2):$ The game is shown in Figure \ref{fig:casedis}.

\begin{figure}[here]
\begin{center}
\begin{tikzpicture}

\tikzstyle{every node}=[ellipse, draw=black, fill=none,
                        inner sep=0pt, minimum width=15pt, minimum height=15pt]

\draw (4,6)     node (s0)	{0};
\draw (5.5,6.66)     node [draw=none] (s1)	{$\gameb_{\gamea, s}(\atla_1)$};
\draw (5.5,5.33)     node [draw=none] (s2)	{$\gameb_{\gamea, s}(\atla_2)$};

\path[->] (3.3,6) edge (s0);
\path[->] (s0) edge node [above left, draw=none] {0} (s1);
\path[->] (s0) edge node [below left, draw=none] {0} (s2);

\end{tikzpicture}
\end{center}
\caption{$\gameb_{\gamea, s}(\atla_1 \vee \atla_2)$}
\label{fig:casedis}

\end{figure}

$\gameb_{\gamea, s}(\neg \atla_1):$ The game is constructed from $\gameb_{\gamea, s}(\atla_1)$ by interchanging circle states and square states and either adding or subtracting 1 to/from all colors.

$\gameb_{\gamea, s}(\coal{A} \atlx \atla_1):$ Let $R(s) = \{(s,v,s') \in R \} = \{(s,v_1,s_1),...,(s,v_m,s_m) \}$. There are two cases to consider. One when $s \in S_j$ for some $j \in A$ and one when $s \not \in S_j$ for all $j \in A$. Both are illustrated in Figure \ref{fig:casex}.

\begin{figure}[here]
\begin{center}
\begin{tikzpicture}

\tikzstyle{every node}=[ellipse, draw=black, fill=none,
                        inner sep=0pt, minimum width=15pt, minimum height=15pt]

\draw (3,6)     node (s0)	{0};
\draw (4.5,6.66)     node [draw=none] (s1)	{$\gameb_{\gamea, s_1}(\atla_1)$};
\draw (4.5,5.33)     node [draw=none] (s2)	{$\gameb_{\gamea, s_m}(\atla_1)$};
\draw (4,6)     node [draw=none] (s3)	{...};

\path[->] (2.3,6) edge (s0);
\path[->] (s0) edge node [above left, draw=none] {$v_1$} (s1);
\path[->] (s0) edge node [below left, draw=none] {$v_m$} (s2);

\draw (8,6)     node [rectangle] (t0)	{0};
\draw (9.5,6.66)     node [draw=none] (t1)	{$\gameb_{\gamea, s_1}(\atla_1)$};
\draw (9.5,5.33)     node [draw=none] (t2)	{$\gameb_{\gamea, s_m}(\atla_1)$};
\draw (9,6)     node [draw=none] (t3)	{...};

\path[->] (7.3,6) edge (t0);
\path[->] (t0) edge node [above left, draw=none] {$v_1$} (t1);
\path[->] (t0) edge node [below left, draw=none] {$v_m$} (t2);

\end{tikzpicture}
\end{center}
\caption{$\gameb_{\gamea, s}(\coal{A} \atlx \atla_1)$. The case on the left is when $s \in S_j$ for some $j \in A$ and the case on the right is when $s \not \in S_j$ for all $j \in A$}
\label{fig:casex}

\end{figure}

$\gameb_{\gamea, s}(\coal{A} \atlg \atla_1):$ In this case we let $\gameb_{\gamea, s}(\coal{A} \atlg \atla_1)$ have the same structure as $\gamea$, but with a few differences. $\pa$ controls all states that are in $S_j$ for some $j \in A$ and $\pb$ controls the other states. Further, for each transition $t = (s',v,s'') \in R$ we put an intermediate state $s_t$ controlled by $\pb$ between $s'$ and $s''$. When the player controlling $s'$ chooses to take the transition $t$ the play is taken to the intermediate state $s_t$ from which $\pb$ can either choose to continue to $s''$ or to go to $\gameb_{\gamea, s''}(\atla_1)$. Every state in $\gameb_{\gamea, s}(\coal{A} \atlg \atla_1)$ which is not part of $\gameb_{\gamea, s''}(\atla_1)$ has the color 0. It is illustrated in Figure \ref{fig:caseg}. Diamond states are states that can either be $\pa$ states or $\pb$ states. The intuition is that $\pb$ can challenge and claim that $\atla_1$ is not true in the current configuration. If he does so, $\pa$ must be able show that it is in fact true in order to win.

\begin{figure}[here]
\begin{center}
\begin{tikzpicture}

\tikzstyle{every node}=[ellipse, draw=black, fill=none,
                        inner sep=0pt, minimum width=15pt, minimum height=15pt]

\draw (1,6)     node [diamond,label=below:$s'$] (s0)	{};
\draw (3,6)     node [diamond,label=below:$s''$] (s1)	{};

\draw (6,6)     node [diamond,label=below:$s'$] (t0)	{0};
\draw (8,6)     node [rectangle, label=below right:$s_t$] (t1)	{0};
\draw (10,6)     node [diamond,label=below:$s''$] (t2)	{0};

\draw (8,5)     node [draw=none] (a)	{$\gameb_{\gamea, s''}(\atla_1)$};

\path[->] (s0) edge node [above, draw=none] {$v$} (s1);
\path[->] (t0) edge node [above, draw=none] {$v$} (t1);
\path[->] (t1) edge node [above, draw=none] {$0$} (t2);
\path[->] (t1) edge node [left, draw=none] {$0$} (a);

\draw[snake=snake] (4.1,6) -- (4.9,6);
\draw[-latex'] (4.9,6) -- (5.05,6);

\end{tikzpicture}
\end{center}
\caption{$\gameb_{\gamea, s}(\coal{A} \atlg \atla_1)$ is obtained by updating each transition in $\gamea$ as shown in the figure.}
\label{fig:caseg}

\end{figure}

$\gameb_{\gamea, s}(\coal{A} \atla_1 \atlu \atla_2):$ The game is constructed similarly to the case of $\coal{A} \atlg$. The differences are that every state is colored by 1 and for each transition $t = (s',v,s'') \in R$ we add two intermediate states $s_t$ and $s'_t$ controlled by $\pa$ and $\pb$ respectively with transitions to $\gameb_{\gamea, s''}(\atla_2)$ and $\gameb_{\gamea, s''}(\atla_1)$ respectively. The situation is illustrated in Figure \ref{fig:caseu}. The intuition is similar, but in this case $\pa$ loses unless he can claim $\atla_2$ is true at some point (and subsequently show that this is in fact the case). In addition $\atla_1$ cannot become false before this point, because then $\pb$ can claim that $\atla_1$ is false and win.

\begin{figure}[here]
\begin{center}
\begin{tikzpicture}

\tikzstyle{every node}=[ellipse, draw=black, fill=none,
                        inner sep=0pt, minimum width=15pt, minimum height=15pt]

\draw (0,6)     node [diamond,label=below:$s'$] (s0)	{};
\draw (2,6)     node [diamond,label=below:$s''$] (s1)	{};

\draw (5,6)     node [diamond,label=below:$s'$] (t0)	{1};
\draw (7,6)     node [label=below right:$s_t$] (t1)	{1};
\draw (9,6)     node [rectangle, label=below right:$s'_t$] (t2)	{1};
\draw (11,6)     node [diamond,label=below:$s''$] (t3)	{1};

\draw (7,5)     node [draw=none] (a)	{$\gameb_{\gamea, s''}(\atla_2)$};
\draw (9,5)     node [draw=none] (b)	{$\gameb_{\gamea, s''}(\atla_1)$};

\path[->] (s0) edge node [above, draw=none] {$v$} (s1);
\path[->] (t0) edge node [above, draw=none] {$v$} (t1);
\path[->] (t1) edge node [above, draw=none] {$0$} (t2);
\path[->] (t2) edge node [above, draw=none] {$0$} (t3);
\path[->] (t1) edge node [left, draw=none] {$0$} (a);
\path[->] (t2) edge node [left, draw=none] {$0$} (b);

\draw[snake=snake] (3.1,6) -- (3.9,6);
\draw[-latex'] (3.9,6) -- (4.05,6);

\end{tikzpicture}
\end{center}
\caption{$\gameb_{\gamea, s}(\coal{A} \atla_1 \atlu \atla_2)$ is obtained by updating each transition in $\gamea$ as shown in the figure.}
\label{fig:caseu}

\end{figure}

Finally, we define the game $\gameb_{\gamea, s, i}(\atla)$ from $\gameb_{\gamea, s}(\atla)$ and a natural number $i \in \nats$ as illustrated in Figure \ref{fig:casei}. Intuitively, this construction is performed to set the initial value of the counter to $i$.

It is now possible to prove the following result by induction on the structure of the $\qATL$ formula $\atla$, giving us a reduction from the model-checking problem to deciding the winner in a one-counter parity game.

\begin{restatable}{proposition}{reduc}
\label{prop:reduc}
For every OCGM $\gamea$, state $s$ in $\gamea$, $i \in \nats$ and $\atla \in \qATL$
$$\gamea, s, i \models \atla \textup{ if and only if } \pa \textup{ has a winning strategy in } \gameb_{\gamea,s,i}(\atla)$$

\end{restatable}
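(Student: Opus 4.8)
The plan is to prove both directions at once, by induction on the structure of the $\qATL$ formula $\atla$, mirroring the inductive definition of $\gameb_{\gamea,s}(\atla)$. First I would peel off the outermost construction: $\gameb_{\gamea,s,i}(\atla)$ is $\gameb_{\gamea,s}(\atla)$ preceded by the gadget of Figure~\ref{fig:casei}, which merely drives the counter from $0$ to $i$ (a length-$i$ chain of $+1$-edges, hence of polynomial size since $i$ is given in unary) before handing control to the initial state of $\gameb_{\gamea,s}(\atla)$; the states of this prefix are visited only finitely often and so cannot change which colour occurs infinitely often, so $\pa$ wins $\gameb_{\gamea,s,i}(\atla)$ iff $\pa$ wins $\gameb_{\gamea,s}(\atla)$ when the play starts with counter value $i$ in the initial state of $\gameb_{\gamea,s}(\atla)$. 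It therefore suffices to establish, by induction on $\atla$, the claim $P(\atla)$: for every OCGM $\gamea$, state $s$ and $i\in\nats$, $\gamea,s,i\models\atla$ iff $\pa$ wins $\gameb_{\gamea,s}(\atla)$ from counter value $i$. I would also record once and for all that every $\gameb_{\gamea,s}(\atla)$ produced by the construction is a \emph{finite} OCG with a parity condition, hence determined (being a parity game, or by Borel determinacy); determinacy is used in the negation case.

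For the base cases: if $\atla=p$ the game is a single self-loop coloured $0$ when $p\in L(s)$ and $1$ otherwise, so $\pa$ wins iff $p\in L(s)$, irrespective of the counter. If $\atla=r\bowtie c$ I would first use $(r=c)\equiv(r\leq c)\wedge\neg(r<c)$, $(r>c)\equiv\neg(r\leq c)$ and $(r\geq c)\equiv\neg(r<c)$ (legal since $\neg,\vee$ are in the language) to reduce to $r<c$, $r\leq c$ and $r\equiv_k c$; for each of these the gadget is a chain or cycle of states all owned by a single player whose only non-looping edges have weight $-1$, so tracing the play the counter is decremented step by step until it hits $0$, whereupon the play is forced into the self-loop of whichever state it has reached, and the colour of that state is $0$ exactly when the constraint holds of $i$. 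For the easy inductive cases: for $\atla_1\vee\atla_2$, $\pa$ owns the fresh initial (colour-$0$, visited-once) state and from it selects one of $\gameb_{\gamea,s}(\atla_1)$, $\gameb_{\gamea,s}(\atla_2)$, so by the induction hypothesis $\pa$ wins iff $\gamea,s,i\models\atla_1$ or $\gamea,s,i\models\atla_2$. For $\neg\atla_1$ the gadget is $\gameb_{\gamea,s}(\atla_1)$ with the two players interchanged and all colours shifted by $\pm1$; a given infinite path is $\pa$-won in the new game iff it is $\pb$-won in $\gameb_{\gamea,s}(\atla_1)$, and the control-swap turns $\pa$-strategies into $\pb$-strategies, so by determinacy $\pa$ wins $\gameb_{\gamea,s}(\neg\atla_1)$ iff $\pb$ wins $\gameb_{\gamea,s}(\atla_1)$ iff $\pa$ does not win $\gameb_{\gamea,s}(\atla_1)$, which by the induction hypothesis is $\gamea,s,i\not\models\atla_1$. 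For $\coal{A}\atlx\atla_1$ the fresh initial state is a $\pa$-state iff $s\in S_j$ for some $j\in A$ and a $\pb$-state otherwise, its outgoing edges carry exactly the weights of the $\gamea$-transitions out of $s$ (so a move is available iff the corresponding transition is enabled), and it leads into $\gameb_{\gamea,s_\ell}(\atla_1)$; applying the induction hypothesis at counter $i+v_\ell$ gives that $\pa$ wins iff there exists (respectively, for all) enabled successor $(s_\ell,i+v_\ell)$ with $\gamea,s_\ell,i+v_\ell\models\atla_1$, matching the semantics of $\coal{A}\atlx\atla_1$.

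The core of the induction is the two cases $\coal{A}\atlg\atla_1$ and $\coal{A}\atla_1\atlu\atla_2$, whose gadgets are copies of $\gamea$ with $\pa$ owning exactly the states of the agents in $A$ and with a ``challenge'' device inserted after each transition $t=(s',v,s'')$: for $\atlg$, a $\pb$-state from which $\pb$ may continue along $\gamea$ or divert (along a weight-$0$ edge) into $\gameb_{\gamea,s''}(\atla_1)$; for $\atlu$, a $\pa$-state diverting into $\gameb_{\gamea,s''}(\atla_2)$ followed by a $\pb$-state diverting into $\gameb_{\gamea,s''}(\atla_1)$; all states outside the diverted-to subgames are coloured $0$ for $\atlg$ and $1$ for $\atlu$. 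The key observation I would exploit is that diversion edges are one-way — once a play leaves the ``spine'' it never returns — so the spine behaviour of $\pa$ in the gadget is literally a collective strategy $\sigma\in\strat^A_\gamea$, and the configurations at which $\pb$ (resp.\ $\pa$) may test $\atla_1$ (resp.\ $\atla_2$) are exactly the configurations traversed by the $\sigma$-compatible plays of $\gamea$, with counter values preserved because the divert edges have weight $0$. Then I would argue: a spine play on which $\pb$ never diverts is a $\sigma$-compatible play of $\gamea$ staying on colour $0$ (for $\atlg$, so $\pa$ wins it) resp.\ colour $1$ (for $\atlu$, where $\pa$ loses it, so a winning $\pa$ must on every consistent play eventually divert into an $\atla_2$-subgame and win there); combining this with the induction hypothesis applied to each subgame, $\pa$ wins the gadget iff $A$ has a collective strategy $\sigma$ such that every $\sigma$-compatible play from $(s,i)$ satisfies $\atlg\atla_1$ (resp.\ $\atla_1\atlu\atla_2$), i.e.\ iff $\gamea,s,i\models\coal{A}\atlg\atla_1$ (resp.\ $\coal{A}\atla_1\atlu\atla_2$).

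The step I expect to be the main obstacle is precisely this last pair of cases: one must exhibit a faithful two-way correspondence between ($\pa$-strategies in the gadget) and (collective strategies for $A$ in $\gamea$ together with a schedule of when to challenge), and verify that every configuration visited along every $\sigma$-compatible play — including the starting configuration, which must also be subject to a challenge — is tested against the relevant subformula, that a dead-end configuration is handled consistently with the convention that a player with no move loses (matching the finite-play clause of the semantics), and that the one-way detours cannot be abused by either player to escape an otherwise losing spine. The remaining wrinkle is that the negation case genuinely relies on determinacy of one-counter parity games, which I would cite or derive from Borel determinacy; apart from these points the proof is a routine, if lengthy, unwinding of the definitions and the figures.
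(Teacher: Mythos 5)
Your proposal follows essentially the same route as the paper's own proof: structural induction on $\atla$, with the counter-chain prefix and the base gadgets handled by direct inspection, determinacy (Borel/parity) invoked for the negation case, prefix-independence for the $\coal{A}\atlx$ case, and the $\coal{A}\atlg$/$\coal{A}\atlu$ cases argued via the one-way challenge modules and a back-and-forth translation between $\pa$-strategies on the spine and collective strategies of $A$ in $\gamea$. The residual details you flag (initial-configuration challenges, dead ends, the ordering of $\pa$'s commit before $\pb$'s challenge in the until case) are exactly the points the paper also treats only briefly, so no substantive gap separates your argument from the published one.
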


\begin{figure}[here]
\begin{center}
\begin{tikzpicture}

\tikzstyle{every node}=[ellipse, draw=black, fill=none,
                        inner sep=0pt, minimum width=15pt, minimum height=15pt]

\draw (2,6)     node [label=below: $v_0$ ] (s00)	{0};
\draw (4,6)     node [label=below: $v_1$ ] (s01)	{0};
\draw (6,6)     node [draw = none] (s02)	{...};
\draw (8,6)     node [label=below: $v_{i-1}$ ] (s03)	{0};
\draw (10,6)     node [draw = none] (s04)	{$\gameb_{\gamea, s}(\atla)$};

\path[->] (1.3,6) edge (s00);

\path[->] (s00) edge node [above, draw=none] {1} (s01);
\path[->] (s01) edge node [above, draw=none] {1} (s02);
\path[->] (s02) edge node [above, draw=none] {1} (s03);
\path[->] (s03) edge node [above, draw=none] {1} (s04);

\end{tikzpicture}
\end{center}
\caption{$\gameb_{\gamea,s,i}(\atla)$ is obtained by increasing the counter value to $i$ initially.}
\label{fig:casei}

\end{figure}

\subsection{Complexity}
\label{sec:lowerbound}

In \cite{BJK10} the selective zero-reachability problem for games on 1-dimensional vector addition systems with states was shown to be $\pspace$-complete. This problem consists of model-checking the fixed $\qATL$ formula $\coal{\{\textup{I}\}} \atlf (r = 0 \wedge p)$ in a 2-player OCGM where $\textup{I}$ is one of the players. The hardness is shown by a reduction from the emptiness problem of 1-letter alternating finite automata which is $\pspace$-complete \cite{JS07}. Thus, the data complexity of model-checking $\qATL$ in OCGMs is $\pspace$-hard. As a consequence of Proposition \ref{prop:reduc} and Proposition \ref{prop:ocg_parity} this lower bound is tight since we can transform the model-checking problem of $\qATL$ to deciding the winner in an OCG with a parity condition that has polynomial size. Thus, model-checking can be performed in polynomial space.



\begin{theorem}
\label{theo:qatl_pspacecomp}
The combined complexity and data complexity of model-checking $\qATL$ OCGMs are both $\pspace$-complete

\end{theorem}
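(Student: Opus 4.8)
The plan is to prove the two directions separately: for membership in $\pspace$ I would combine Proposition~\ref{prop:reduc} with Proposition~\ref{prop:ocg_parity}, and for $\pspace$-hardness I would invoke the hardness, established in \cite{BJK10}, of selective zero-reachability in one-dimensional VASS games. By Proposition~\ref{prop:reduc} the question $\gamea,s,i\models\atla$ is equivalent to whether $\pa$ wins the one-counter parity game $\gameb_{\gamea,s,i}(\atla)$, and by Proposition~\ref{prop:ocg_parity} the winner of such a game is computable in polynomial space; since $\pspace$ is closed under polynomial-time reductions, it therefore suffices to check that $\gameb_{\gamea,s,i}(\atla)$ is a one-counter parity game that can be built from $\gamea$, $s$, $i$ and $\atla$ in polynomial time.

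The only delicate point in that check, and the one I expect to be the main obstacle, is keeping the size of $\gameb_{\gamea,s,i}(\atla)$ polynomial. Taken literally, the inductive definition is exponential: the gadgets for $\coal{A}\atlx$, $\coal{A}\atlg$ and $\coal{A}\atlu$ each plug in a fresh copy of $\gameb_{\gamea,s'}(\atla_j)$ for many states $s'$, so iterating through, say, $\coal{A_1}\atlx\cdots\coal{A_n}\atlx p$ would yield a tree of depth $n$ and branching equal to the out-degree of $\gamea$. This is avoided by two observations: a game $\gameb_{\gamea,s'}(\atlb)$ is absorbing inside any gadget it is plugged into (the play never returns to the surrounding gadget), so the $|S|$ games $\gameb_{\gamea,s'}(\atlb)$ attached to a fixed parse-tree node $\atlb$ can be shared among all occurrences without changing from which configurations $\pa$ wins; and the negation case is a relabelling (swap owners, shift all colours by one) carried out in place, not a duplication. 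With sharing, each of the $O(|\atla|)$ parse-tree nodes contributes only polynomially much to the arena --- a constant-size base gadget, a counter-constraint gadget of size $O(c)$ or $O(k)$, or a copy of $\gamea$ augmented with $O(|R|)$ intermediate states and edges to shared subgames --- and prepending the $O(i)$ states of the initialisation gadget of Figure~\ref{fig:casei} keeps the total polynomial in $|\gamea|+|\atla|+i$, using that $i$ is given in unary for OCGMs and that the constants in the counter constraints are likewise not binary-encoded in the non-succinct model (and are fixed anyway for data complexity). Finally, this is genuinely a one-counter parity game in the sense of the Preliminaries: the colours stay within an interval of length $O(|\atla|)$ because only negation affects them, and only by a shift of one, so they can be taken to be natural numbers; and the initial counter value can be taken to be $0$ because Figure~\ref{fig:casei} raises it to $i$ internally. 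Applying Proposition~\ref{prop:ocg_parity} then places model-checking $\qATL$ on OCGMs in $\pspace$, for combined complexity and hence for data complexity.

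For the matching lower bound, I would recall from \cite{BJK10} that deciding the winner of a selective zero-reachability game on a two-player one-dimensional VASS is $\pspace$-complete, and observe that this is exactly the problem of deciding $\gamea,s,0\models\coal{\{\textup{I}\}}\atlf(r=0\wedge p)$ in a two-player OCGM, where $p$ marks the target states and $\coal{\{\textup{I}\}}\atlf(r=0\wedge p)$ abbreviates the genuine $\qATL$ formula $\coal{\{\textup{I}\}}(\top\,\atlu\,(r=0\wedge p))$. Since this formula is fixed, the reduction already shows the data complexity of model-checking $\qATL$ on OCGMs to be $\pspace$-hard, and hence so is the combined complexity. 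Together with the upper bound, both are $\pspace$-complete.
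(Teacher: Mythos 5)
Your proof follows the paper's own route exactly: membership via Proposition~\ref{prop:reduc} combined with Proposition~\ref{prop:ocg_parity}, and hardness via the $\pspace$-hardness of selective zero-reachability in one-dimensional VASS games from \cite{BJK10}, expressed as the fixed formula $\coal{\{\textup{I}\}}\atlf(r=0\wedge p)$. Your additional argument that subgame sharing keeps $\gameb_{\gamea,s,i}(\atla)$ polynomial is a sound and welcome elaboration of a size claim the paper merely asserts.
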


In \cite{GHOW10} it was shown that the data complexity of model-checking $\CTL$ in SOCPs is $\expspace$-complete even for a fixed (but rather complicated) formula. Since this problem is subsumed by the model-checking problem of $\qATL$ in SOCGMs we have the same lower bound for the data complexity of model-checking $\qATL$ in SOCGMs. It can be shown that this bound is tight as follows. We can create a model-checking game for $\qATL$ in SOCGMs in the same way as for OCGMs and obtain a model-checking game which is an SOCG with a parity winning condition. This can be transformed into an OCG with a parity winning condition that is exponentially larger. It is done by replacing each transition with weight $v$ with a path that has $v$ transitions and adding small gadgets to make sure that a player loses if he tries to take a transition with value $-w$ for $w \in \nats$ when the current counter value is less than $w$. The exponential blowup is due to the weights being input in binary. We can then apply Proposition \ref{prop:ocg_parity} and solve this game in exponential space. Thus, we have the following.



\begin{theorem}

The combined complexity and data complexity of model-checking $\qATL$ in SOCGMs are both $\expspace$-complete.

\end{theorem}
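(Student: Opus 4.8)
The plan is to prove matching $\expspace$ lower and upper bounds, and since data complexity is a special case of combined complexity it suffices to prove the lower bound for a fixed formula and the upper bound in the combined setting.

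\textbf{Lower bound.} I would observe that model-checking $\CTL$ on SOCPs is literally an instance of model-checking $\qATL$ on SOCGMs: an SOCP is an SOCG whose player set $\agt$ is a singleton, and each $\CTL$ formula translates into a $\qATL$ formula by replacing the existential path quantifier with $\coal{\agt}$ and the universal one with $\coal{\emptyset}$ (with a single player, $\coal{\agt}\atlsb$ expresses ``some path satisfies $\atlsb$'' and $\coal{\emptyset}\atlsb$ expresses ``every path satisfies $\atlsb$''), the temporal operators $\atlx, \atlg, \atlu$ being common to both logics. As \cite{GHOW10} shows that the data complexity of $\CTL$ model-checking on SOCPs is already $\expspace$-hard for a fixed formula, the image of that formula under this translation witnesses $\expspace$-hardness of the data complexity of $\qATL$ model-checking on SOCGMs, hence also of the combined complexity.

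\textbf{Upper bound, model-checking game.} I would first replay the construction of Section 4.1 verbatim for SOCGMs. Every gadget there only copies transitions of $\gamea$ (sometimes with $\pm 1$ self-loops or fresh $0$-weight edges), so the binary-encoded weights of $\gamea$ simply carry over, and the resulting game $\gameb_{\gamea,s,i}(\atla)$ is an SOCG with a parity winning condition whose size is polynomial in $|\gamea|+|\atla|$; the analogue of Proposition~\ref{prop:reduc} then goes through by the same induction on $\atla$, since nothing in those arguments used that weights lie in $\{-1,0,1\}$. Next I would eliminate the binary weights: transform $\gameb_{\gamea,s,i}(\atla)$ into an ordinary OCG parity game $\gameb'$ by replacing each transition of weight $v$ with a chain of $|v|$ edges of weight $\mathrm{sign}(v)$ through fresh intermediate states, all owned by the player who owned the original source (so no new choices appear) and all carrying a fixed neutral color that never affects which color occurs infinitely often. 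A decrementing chain must be guarded so that, if at some point along it the counter would drop below $0$, the play is diverted to an absorbing sink whose color makes that player lose — faithfully mirroring that in the SOCG the original transition was simply disabled. Then $\gameb'$ has size exponential in $|\gamea|+|\atla|$ (since $|v|$ can be exponential in the bit-length of $v$) but is a genuine OCG with a parity condition, so by Proposition~\ref{prop:ocg_parity} its winner is decidable in space polynomial in $|\gameb'|$, i.e. in space exponential in $|\gamea|+|\atla|$. Composing with the SOCGM analogue of Proposition~\ref{prop:reduc} yields the $\expspace$ upper bound, and a fortiori for the data complexity.

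\textbf{Main obstacle.} The delicate point is the guarded-decrement gadget: one must ensure it punishes exactly the player who in the original SOCG had no legal move, and that the opponent gains nothing by driving the counter low — in particular the guarded player must be able to check locally, and without any additional strategic choice, whether the remaining decrements of a chain are affordable before committing to it. Some bookkeeping is also needed so that the fresh neutral color is handled to never be the least color seen infinitely often, and so that the two games agree on the initial-value gadget of Figure~\ref{fig:casei} after expansion; but these are routine compared with getting the guard exactly right.
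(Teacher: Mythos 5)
Your proposal is correct and follows essentially the same route as the paper: the lower bound is inherited from the $\expspace$-hardness (for a fixed formula) of $\CTL$ model-checking on SOCPs \cite{GHOW10}, which is subsumed by $\qATL$ model-checking on SOCGMs, and the upper bound constructs the SOCG parity model-checking game as in the OCGM case, unfolds each binary-encoded weight into a chain of unit-weight edges with gadgets punishing a player who attempts a decrement the counter cannot afford (an exponential blowup), and then applies Proposition~\ref{prop:ocg_parity} to the exponentially larger one-counter parity game. The guard-gadget subtlety you flag is precisely the detail the paper itself treats only in passing.
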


These results are quite positive. Indeed, in OCGMs reachability games are already $\pspace$-complete \cite{BJK10}. Considering that in $\qATL$ we have nesting of strategic operators, eventuality operators, safety operators and comparison of counter values with constants it is very positive that we stay in the same complexity class. For SOCGMs $\CTL$ model-checking is already $\expspace$-complete \cite{GHOW10} which means that we can add several players as well as counter constraints without leaving $\expspace$.

\section{Model-checking $\qATLs$}

As for model-checking of $\qATL$ we rely on the approach of a model-checking game when model-checking $\qATLs$. However, due to the extended possibilities of nesting we do not handle temporal operators directly as we did for formulas of the form $\coal{A} \atla \atlu \atlb$, $\coal{A} \atlg \atla$ and $\coal{A} \atlx \atla$. Instead, we resort to a translation of $\LTL$ formulas into deterministic parity automata (DPA) which is combined with the model-checking game approach. This gives us model-checking games which are one-counter parity games as for $\qATL$, but with doubly exponential size in the input formula due to the translation from $\LTL$ formulas to DPAs.

\subsection{Adjusting the model-checking game to $\qATLs$}

Let $\gamea = (S, \agt, (S_j)_{j \in \agt}, R, AP, L)$ be an OCGM, $s_0 \in S$, $i \in \nats$ and $\atla$ be a $\qATLs$ state formula. The algorithm to decide whether $\gamea, s_0, i \models \atla$ follows along the same lines as our algorithm for $\qATL$. That is, we construct a model-checking game $\gameb_{\gamea, s_0, i}(\atla)$ between two players $\pa$ and $\pb$ that try to verify and falsify the formula respectively. Then $\pa$ has a winning strategy in $\gameb_{\gamea, s_0, i}(\atla)$ if and only if $\gamea, s_0, i \models \atla$. The construction is done inductively on the structure of $\atla$. For each state $s \in S$ and state formula $\atla$ we define a characteristic OCG $\gameb_{\gamea, s}(\atla)$. For formulas of the form $p, r \bowtie c, \neg \atla_1$ and $\atla_1 \vee \atla_2$ the construction is as for $\qATL$ assuming in the inductive cases that $\gameb_{\gamea, s}(\atla_1)$ and $\gameb_{\gamea, s}(\atla_2)$ have already been defined.

The interesting case is $\atla = \coal{A} \atla_1$. Here, let $\atlb_1,...,\atlb_m$ be the outermost proper state subformulas of $\atla_1$. Let $P = \{p_1,...,p_m\}$ be fresh propositions and let $f(\atla_1) = \atla_1[\atlb_1 \mapsto p_1,...,\atlb_m \mapsto p_m]$ be the formula obtained from $\atla_1$ by replacing the outermost proper state subformulas with the corresponding fresh propositions. Let $\prop' = \prop \cup P$. Now, $f(\atla_1)$ is an $\LTL$ formula over $\prop'$. We can therefore construct a deterministic parity automaton (DPA) $\auta_{f(\atla_1)}$ with input alphabet $2^{\prop'}$ such that the language $L(\auta_{f(\atla_1)})$ of the automaton is exactly the set of linear models of $f(\atla_1)$. The number of states of the DPA can be bounded by $O((2^{n\cdot2^n})(2^n)!)$ and the number of colors by $O(2 \cdot 2^n) = O(2^{n+1})$ where $n$ is the size of the formula $f(\atla_1)$. These bounds are obtained by using the fact that a non-deterministic B\" uchi automaton (NBA) $\autb_{f(\atla_1)}$ with $O(2^n)$ states and $L(\autb_{f(\atla_1)}) = \traces(f(\atla_1))$ can be constructed \cite{WVS83}. From this, a DPA accepting the same language can be constructed using a technique from \cite{Pit07} which translates an NBA with $m$ states to a DPA with $2m^m \cdot m!$ states and 2m colors.

The game $\gameb_{\gamea, s}(\atla)$ is now constructed with the same structure as $\gamea$, where $\pa$ controls the states for players in $A$ and $\pb$ controls the states for players in $\agt \setminus A$. However, we need to deal with truth values of the formulas $\atlb_1,...,\atlb_m$ which can in general not be labelled to states in $\gamea$ since they depend both on the current state and counter value. Therefore we change the structure to obtain $\gameb_{\gamea, s}(\atla)$ as follows. For each state $s$ and $t$ with $(s,t) \in R$ we embed a module as shown in Figure \ref{fig:updatest}. Here, $2^{\prop'} = \{\atlsa_0,...,\atlsa_\ell\}$ and for each $0 \leq j \leq \ell$ we let $\{\atlb_{j0},...,\atlb_{j k_j} \} = \{\atlb_i \mid p_i \in \atlsa_j \} \cup \{\neg \atlb_i \mid p_i \not \in \atlsa_j \}$.

\begin{figure}[here]
\begin{center}
\begin{tikzpicture}

\tikzstyle{every node}=[ellipse, draw=black, fill=none,
                        inner sep=0pt, minimum width=15pt, minimum height=15pt]

\draw (0,7)     node [diamond,label=below:$s$] (s0)	{};
\draw (2,7)     node [diamond,label=below:$t$] (s1)	{};

\draw (5,7)     node [diamond,label=below:$s$] (t0)	{};
\draw (7,7)     node   (t1)	{};
\draw (9,8.33)     node [rectangle,label=above:$t(\atlsa_0)$] (t20)	{};
\draw (9,7)     node [draw=none] (t22)	{...};
\draw (9,5.66)     node [rectangle,label=below:$t(\atlsa_\ell)$] (t23)	{};

\draw (11,7)     node [diamond,label=below:$t$] (t3)	{};

\draw (12,9)     node [draw=none] (a)	{$\gameb_{\gamea, t}(\atlb_{0 0})$};
\draw (12,8.33)     node [draw=none] (b)	{...};
\draw (12,7.66)     node [draw=none] (c)	{$\gameb_{\gamea, t}(\atlb_{0 k_0})$};

\draw (12,6.33)     node [draw=none] (d)	{$\gameb_{\gamea, t}(\atlb_{\ell 0})$};
\draw (12,5.66)     node [draw=none] (e)	{...};
\draw (12,5)     node [draw=none] (f)	{$\gameb_{\gamea, t}(\atlb_{\ell k_\ell})$};

\path[->] (s0) edge node [above, draw=none] {$v$} (s1);
\path[->] (t0) edge node [above, draw=none] {$v$} (t1);
\path[->] (t1) edge node [above, draw=none] {$0$} (t20);
\path[->] (t1) edge node [above, draw=none] {$0$} (t23);
\path[->] (t20) edge node [below left, draw=none] {$0$} (t3);
\path[->] (t23) edge node [above left, draw=none] {$0$} (t3);
\path[->] (t20) edge node [above left, draw=none] {$0$} (a);
\path[->] (t20) edge node [below left, draw=none] {$0$} (c);
\path[->] (t23) edge node [above left, draw=none] {$0$} (d);
\path[->] (t23) edge node [below left, draw=none] {$0$} (f);

\draw[snake=snake] (3.1,7) -- (3.9,7);
\draw[-latex'] (3.9,7) -- (4.05,7);

\end{tikzpicture}
\end{center}
\caption{$\gameb_{\gamea, s}(\coal{A} \atla)$ is obtained by updating each transition as shown in the figure.}
\label{fig:updatest}

\end{figure}

The idea is that when a transition is taken from $(s,w)$ to $(t,w+v)$, $\pa$ must specify which of the propositions $p_1,...,p_m$ are true in $(t,w+v)$, this is done by picking one of the subsets $\atlsa_j$ (which is the set of propositions that are true in state $t(\atlsa_j)$). Then, to make sure that $\pa$ does not cheat, $\pb$ has the opportunity to challenge any of the truth values of the propositions specified by $\pa$. If $\pb$ challenges, the play never returns again. Thus, if $\pb$ challenges incorrectly, $\pa$ can make sure to win the game. However, if $\pb$ challenges correctly then $\pb$ can be sure to win the game. If $\pa$ has a winning strategy, then it consists in choosing the correct values of the propositions at each step. If $\pa$ does choose correctly and $\pb$ never challenges, the winner of the game should be determined based on whether the $\LTL$ property specified by $f(\atla_1)$ is satisfied during the play. We handle this by labelling $t(\atlsa_j)$ with the propositions in $\atlsa_j$. Further, since every step of the game is divided into three steps (the original step, the specification by $\pa$ and the challenge opportunity for $\pb$) we alter the deterministc automaton $\auta_{f(\atla_1)}$ such that it only takes a transition every third step. This simply increases its size by a factor 3. We then perform a product of the game with the updated parity automaton to obtain the parity game $\gameb_{\gamea, s}(\coal{A} \atla_1)$. It is important to note that the product with the automaton is not performed on the challenge modules (which are already colored), but only with states in the main module. This keeps the size of the game double-exponential in the size of the formula. We now have the following.

\begin{proposition}

For every OCGM $\gamea$, state $s$ in $\gamea$, $i \in \nats$ and state formula $\atla \in \qATLs$
$$\gamea, s, i \models \atla \textup{ if and only if } \pa \textup{ has a winning strategy in } \gameb_{\gamea, s, i}(\atla) $$

\end{proposition}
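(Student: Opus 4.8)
The plan is to prove the statement by induction on the structure of the $\qATLs$ state formula $\atla$, mirroring the proof of Proposition~\ref{prop:reduc} for $\qATL$. For the base cases ($p$ and $r \bowtie c$) and the Boolean cases ($\neg\atla_1$, $\atla_1 \vee \atla_2$), the gadgets and their correctness arguments are literally the same as in the $\qATL$ construction, so these follow immediately from the earlier analysis. The only genuinely new case is $\atla = \coal{A}\atla_1$, and here the inductive hypothesis gives us, for each outermost proper state subformula $\atlb_1,\dots,\atlb_m$ of $\atla_1$ and each state $t$, that $\pa$ wins $\gameb_{\gamea,t,j}(\atlb_i)$ iff $\gamea,t,j \models \atlb_i$ (and dually for $\neg\atlb_i$, using the negation gadget).

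For the case $\atla = \coal{A}\atla_1$ I would argue both implications. First, suppose $\gamea, s, i \models \coal{A}\atla_1$, witnessed by a collective strategy $\sigma$ of $A$ in $\gamea$. I would construct a winning strategy for $\pa$ in $\gameb_{\gamea,s,i}(\atla)$: in the main module, $\pa$ plays according to $\sigma$ on the states of players in $A$, and on each edge-module $\pa$ truthfully reports the set $\atlsa_j \in 2^{\prop'}$ of fresh propositions $p_k$ (equivalently, the state subformulas $\atlb_k$) that hold at the configuration $(t, w+v)$ just reached --- this is well-defined because the truth of $\atlb_k$ at $(t,w+v)$ is determined. If $\pb$ ever challenges by entering $\gameb_{\gamea,t}(\atlb_{j k})$, then since $\pa$ reported truthfully, $\gamea, t, w+v \models \atlb_{jk}$, so by the inductive hypothesis $\pa$ wins that subgame. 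If $\pb$ never challenges, the infinite play projects to a play $\pi \in \play_\gamea((s,i),\sigma)$, and the sequence of labels seen (the $\atlsa_j$'s chosen by $\pa$, read off the $t(\atlsa_j)$ states) is exactly the trace of $\pi$ with respect to $\prop'$; since $\gamea,\pi \models \atla_1$ iff this trace is a model of $f(\atla_1)$, and $\auta_{f(\atla_1)}$ accepts exactly the models of $f(\atla_1)$, the parity condition of the product game is met. The converse is symmetric: from a winning strategy of $\pa$ one extracts a collective strategy $\sigma$ for $A$ in $\gamea$; for any play $\pi$ compatible with $\sigma$, $\pb$ can play the corresponding play in $\gameb_{\gamea,s,i}(\atla)$ and, should $\pa$ ever misreport the truth value of some $\atlb_k$, $\pb$ challenges into the correct subgame, where by the inductive hypothesis (applied to $\atlb_k$ or $\neg\atlb_k$) $\pb$ wins --- contradicting that $\pa$'s strategy was winning; hence $\pa$ always reports truthfully, the trace of $\pi$ is accepted by $\auta_{f(\atla_1)}$, so $\gamea,\pi\models\atla_1$, giving $\gamea,s,i\models\coal{A}\atla_1$.

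Finally, the outer gadget of Figure~\ref{fig:casei} sets the counter to $i$ before entering $\gameb_{\gamea,s}(\atla)$; since its edges only add to the counter and it is deterministic, $\pa$ wins $\gameb_{\gamea,s,i}(\atla)$ iff $\pa$ wins $\gameb_{\gamea,s}(\atla)$ started at counter value $i$, which closes the induction.

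The main obstacle I expect is making precise the bookkeeping in the $\coal{A}\atla_1$ case: one must carefully verify that because every original step is split into three steps and $\auta_{f(\atla_1)}$ is slowed down by a factor three (and the product is taken only on the main-module states, not the challenge modules), the sequence of automaton-transitions taken along a non-challenged play corresponds bijectively to the transitions of $\auta_{f(\atla_1)}$ on the $\prop'$-trace of the projected play $\pi$, so that the parity winner of the product game coincides with acceptance of that trace. A secondary subtlety is handling the possibility that $\pb$ (in the converse direction) challenges even when $\pa$ reported correctly: then $\pa$ still wins the subgame by the inductive hypothesis, so this never helps $\pb$; and the possibility that $\pa$'s winning strategy is not "truthful" --- this is exactly what the challenge mechanism rules out, and spelling that out is the crux of the argument.
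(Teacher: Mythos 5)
Your proposal is correct and follows essentially the same route as the paper's own (sketched) proof: induction on the formula, reusing the $\qATL$ gadget arguments for the base and Boolean cases, and for $\coal{A}\atla_1$ translating between a collective strategy of $A$ and a ``truthful'' strategy of $\pa$, with the challenge modules handled by the induction hypothesis and non-challenged plays decided by the parity condition inherited from $\auta_{f(\atla_1)}$. In fact you spell out more of the bookkeeping (the tripled steps, the product being taken only on the main module) than the paper's sketch does.
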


\begin{proof} Due to space limitations, we only provide a sketch of the proof with the main ideas. The proof is done by induction on the structure of $\atla$. The base cases as well as boolean combinations are omitted since they work as for $\qATL$. The interesting case is $\atla = \coal{A} \atla_1$.

Suppose first that $\gamea,s,i \models \coal{A} \atla_1$. Then coalition $A$ has a winning strategy $\sigma$ in $\gamea$. From this, we generate a strategy $\sigma'$ for $\pa$ in $\gameb_{\gamea, s, i}(\coal{A} \atla_1)$ that consists in never cheating when specifying values of atomic formulas and choosing transitions according to what $\sigma$ would have done in $\gamea$. Then, if $\pb$ challenges at some point, $\pa$ can be sure to win by the induction hypothesis since he never cheats. If $\pb$ never challenges (or, until he challenges), $\pa$ simply mimics the collective winning strategy $\sigma$ of coalition $A$ in $\gamea$ from $(s,i)$. This ensures that he wins in the parity game due to the definition of the parity condition from the parity automaton corresponding to $f(\atla_1)$.

Suppose on the other hand that $\pa$ has a winning strategy $\sigma$ in $\gameb_{\gamea, s, i}(\coal{A} \atla_1)$. Then $\sigma$ never cheats when specifying values of propositions, because then $\pb$ could win according to the induction hypothesis. Define a strategy $\sigma'$ for coalition $A$ in $\gamea$ that plays like $\sigma$ in the part of $\gameb_{\gamea, s, i}(\coal{A} \atla_1)$ where no challenge has occured. $\sigma'$ is winning for $A$ with condition $\atla_1$ in $\gamea$ due to the definition of $\gameb_{\gamea, s, i}(\coal{A} \atla_1)$ using the automaton $\auta_{f(\atla_1)}$.

\end{proof}

\subsection{Complexity}

The size of the model-checking game is doubly-exponential in the size of the formula. Therefore, it can be solved in doubly-exponential space because it is a one-counter parity game using Proposition \ref{prop:ocg_parity}. Actually, this is the case for both OCGMs and SOCGMs. Indeed, we extend the technique to SOCGMs as we did in the case of $\qATL$. However, with respect to complexity, the blowup caused by the binary representation of edge weights only matters when the formula is fixed since the game is already doubly-exponential when the input formula is a parameter. Thus, for $\qATLs$ we can do model-checking in doubly-exponential space whereas for a fixed formula it is in $\expspace$ for SOCGMs and $\pspace$ for OCGMs.

For combined complexity we can show that $\twoexpspace$ is a tight lower bound by a reduction from the word acceptance problem of a doubly-exponential space Turing machine. The reduction reuses ideas from \cite{JS07}, \cite{KMTV00} and \cite{BMP05}. The proof is in Appendix \ref{app:lower}. For a fixed formula we get tight lower bounds immediately from the results on $\qATL$.

\begin{restatable}{theorem}{twoexp}
\label{theo:twoexp}

The combined complexity of model-checking $\qATLs$ is $\twoexpspace$-complete for both OCGMs and SOCGMs. The data complexity of model-checking $\qATLs$ is $\pspace$-complete for OCGMs and $\expspace$-complete for SOCGMs.

\end{restatable}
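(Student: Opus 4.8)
The plan is to establish the four parts of Theorem~\ref{theo:twoexp} separately, reusing the machinery already developed.

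\medskip

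\noindent\textbf{Upper bounds.} The plan is to first handle the combined complexity upper bound. Given an SOCGM $\gamea$, a state $s$, an initial value $i$ in binary and a $\qATLs$ state formula $\atla$, we build the model-checking game $\gameb_{\gamea,s,i}(\atla)$ by the inductive construction described in the previous subsection. The key estimate is that the construction only inflates the game once by a doubly-exponential factor, at each $\coal{A}\atla_1$ node, because the DPA $\auta_{f(\atla_1)}$ has $O((2^{n\cdot 2^n})(2^n)!)$ states, where $n$ is the size of $f(\atla_1)$; crucially $n$ is bounded by the size of $\atla$ (the replacement of proper state subformulas by fresh propositions does not increase size), so the nesting of strategic operators composes these blowups additively rather than multiplicatively, and the overall game stays doubly-exponential in $|\atla|$ and polynomial in $|\gamea|$. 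Then, as in the $\qATL$ case, one transforms this SOCG-with-parity into an OCG-with-parity at the cost of another exponential (replacing a weight-$v$ edge by a path of $|v|$ unit edges, with the zero-test gadgets), which is absorbed since the game is already doubly-exponential. Applying Proposition~\ref{prop:ocg_parity} solves this one-counter parity game in space polynomial in its (doubly-exponential) size, i.e.\ in $\twoexpspace$. For the data-complexity bounds one observes that when $\atla$ is fixed the DPA has constant size, so the model-checking game has size polynomial in $|\gamea|$ for an OCGM and exponential in $|\gamea|$ for an SOCGM (the exponential coming only from expanding binary weights); Proposition~\ref{prop:ocg_parity} then gives $\pspace$ and $\expspace$ respectively.

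\medskip

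\noindent\textbf{Lower bounds.} The plan is to obtain the two data-complexity lower bounds for free: a $\qATL$ formula is a $\qATLs$ formula, so the $\pspace$-hardness of $\qATL$ model-checking on OCGMs (Theorem~\ref{theo:qatl_pspacecomp}) and the $\expspace$-hardness on SOCGMs carry over verbatim. The combined-complexity lower bound is the real work: I would give a reduction from the word-acceptance problem of a fixed $\twoexpspace$ Turing machine $\tm$, building in polynomial time from an input word $w$ an OCGM $\gamea_w$ and a $\qATLs$ formula $\atla_w$ such that $\tm$ accepts $w$ iff $\gamea_w,s_0,0\models\atla_w$. The idea is that a $\twoexpspace$ computation of $\tm$ on $w$ uses $2^{2^{p(|w|)}}$ tape cells, so a configuration is a word of that length; $\pa$ (Verifier) is forced to produce a run of $\tm$ step by step, and $\pb$ (Falsifier) is allowed to challenge either a cell address (which he writes in binary, using $2^{p(|w|)}$ bits) or the correctness of a local transition between two successive configurations. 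The doubly-exponential address is encoded using the counter together with modulo constraints $r\equiv_k c$ so that the counter can store a number up to $2^{p(|w|)}$ bits while $\pb$'s challenge of an address is verified by an $\LTL$ formula inside a $\coal{\cdot}$ operator; this is exactly where the combination of one-counter arithmetic (good for one exponential) and the $\LTL$/strategic layer (good for another exponential, via the alternating-automaton-style counting of \cite{JS07,KMTV00,BMP05}) buys the full $\twoexpspace$. The $\qATLs$ formula $\atla_w$ is of the shape $\coal{\pa}(\atlg(\text{``every challenge Falsifier raises is answered''}))$ with the answering conditions being $\LTL$ formulas over the propositions tracking cell contents, head position and the binary digits of addresses and step-counters.

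\medskip

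\noindent\textbf{Main obstacle.} I expect the hard part to be the combined-complexity lower bound, and within it the precise accounting that the reduction really forces the \emph{second} exponential. One must be careful that $\pb$'s ability to challenge an arbitrary one of the doubly-exponentially many cell positions, together with $\pa$'s obligation to maintain consistency, genuinely simulates a $\twoexpspace$ tape rather than merely an $\expspace$ one; this requires a two-layer pointer mechanism (counter value for the low-order $2^{p(|w|)}$ bits of an address, and a separate $\LTL$-encoded binary counter for the remaining bits, or vice versa) and a matching-position subroutine expressed in $\LTL$ that runs in polynomial size. Getting the zero-test/VASS-semantics interaction right so that the counter is never accidentally forced to zero during these comparisons, and checking that all the $\LTL$ subformulas and the game both remain polynomial in $|w|$, are the delicate points; the rest of the argument is a routine adaptation of the cited pushdown-module-checking and one-letter-alternating-automata techniques. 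The remaining three bounds are, as indicated above, immediate given Proposition~\ref{prop:ocg_parity}, Proposition~\ref{prop:reduc}'s $\qATLs$-analogue, and Theorem~\ref{theo:qatl_pspacecomp}.
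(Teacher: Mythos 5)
Your upper bounds and data-complexity lower bounds are exactly the paper's argument: the model-checking game stays doubly-exponential because the DPA product is taken only on the main module while subformula games are attached as separate challenge modules (so the blowups add up), the SOCGM-to-OCGM weight expansion is absorbed, Proposition~\ref{prop:ocg_parity} gives the space bounds, and the fixed-formula lower bounds are inherited from $\qATL$. The divergence --- and the gap --- is in the combined-complexity $\twoexpspace$ lower bound, which is the only part needing real work. Your mechanism stores (part of) a tape-cell address in the game's single counter and verifies $\pb$'s address challenges ``by an $\LTL$ formula'' using modulo constraints. This does not go through as described: counter constraints $r \bowtie c$ and $r \equiv_k c$ only compare the counter to \emph{fixed constants written in the formula}, so no polynomial-size $\qATLs$ formula can assert that the counter value at one point of a play equals (or is the successor of) the counter value at another point, which is precisely what matching a challenged cell across consecutive configurations requires; and a gadget-style comparison by counting the counter down destroys the stored value and can be performed only once per play, so it cannot police a forward simulation that must re-verify doubly-exponentially many addresses. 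The ``two-layer pointer'' split (counter for low-order bits, $\LTL$-encoded counter for the rest) is left unspecified exactly where the second exponential has to be earned.

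The paper avoids this by never putting address information into the counter. It simulates the computation \emph{backwards}, tracking a single tape cell: an intermediate reachability OCG of doubly-exponential size has states $(j,d)$ meaning ``cell $j$ holds $d$'', $\pa$ first pumps the counter up to a guessed halting time, then repeatedly proposes a predecessor triple in $\pred(d)$ while $\pb$ picks which of its three cells to track, the counter decreasing by one per step; the key lemma (proved by induction on $i$) is that $((j,d),i)$ is winning for $\pa$ iff $C^w_i(j)=d$, and the initial-configuration check at counter value $0$ is enforced through the VASS-disabled transition at $s'_z$. Only then is the doubly-exponential cell index $j$ compressed out of the state space: it is a $2^n$-bit number spelled out by $\pa$'s choices and kept honest by $\pb$'s challenges together with a polynomial-size $\LTL$ formula inside a single $\coal{\{\pa\}}$, using the $2^n$-bit-counter/yardstick technique of \cite{KMTV00,BMP05}; correctness of its updates is easy because the tracked position changes by at most one per step. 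So the counter carries the time index (used monotonically and zero-tested once), and the $\LTL$/alternation layer alone carries the full doubly-exponential address --- your proposal would need to be reworked along these lines, or you would need to supply a genuinely new mechanism for cross-play counter comparisons, which the logic does not provide.
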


Since we have an $\expspace$ lower bound for data complexity of $\CTL$ model-checking in SOCPs \cite{GHOW10} and a $\pspace$ lower bound for data complexity of $\CTL$ model-checking in OCPs \cite{GL13} we get the following results for data complexity of model-checking $\CTLs$ in OCPs.

\begin{corollary}
 
The data complexity of model-checking $\CTLs$ in OCPs and SOCPs are $\pspace$-complete and $\expspace$-complete respectively.
 
\end{corollary}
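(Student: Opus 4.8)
The plan is to derive the corollary from Theorem~\ref{theo:twoexp} by presenting $\CTLs$ model-checking on OCPs (resp. SOCPs) as a syntactic special case of $\qATLs$ model-checking on OCGMs (resp. SOCGMs), and to obtain the matching lower bounds from the known hardness of $\CTL$ model-checking on these structures.

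For the upper bounds I would fix a translation. An OCP is, by definition, a one-counter game whose agent set is a singleton $\{\textup{Sys}\}$, so $\textup{Sys}$ controls every state; equipping it with a labelling turns it into a one-player OCGM, and likewise an SOCP into a one-player SOCGM. Given a $\CTLs$ formula $\atla$ over $\prop$ (in whichever variant \cite{GHOW10,GL13} use, possibly with counter tests $r\bowtie c$, which $\qATLs$ also provides), let $t(\atla)$ be obtained by letting $t$ commute with all Boolean and temporal operators, fixing atomic subformulas, and setting $t(E\atlsb)=\coal{\agt}\,t(\atlsb)$ and $t(A\atlsb)=\coal{\emptyset}\,t(\atlsb)$. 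The key observation is that in a one-player game a strategy of $\textup{Sys}$ determines a unique play from each configuration, so $\coal{\agt}\,t(\atlsb)$ holds at $(s,i)$ iff some play from $(s,i)$ satisfies $t(\atlsb)$, which is exactly when $E\atlsb$ holds; dually, the only collective strategy of the empty coalition is the empty tuple, whose compatible plays from $(s,i)$ are all plays from $(s,i)$, so $\coal{\emptyset}\,t(\atlsb)$ captures $A\atlsb$ (one may equally take $t(A\atlsb)=\neg\coal{\agt}\neg\,t(\atlsb)$, which is sound because one-player games are determined). A routine induction on $\atla$ then gives $\gamea,s,i\models\atla$ iff $\gamea,s,i\models t(\atla)$ for every one-player game model $\gamea$. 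As $t$ is computable in logarithmic space, leaves the model unchanged (including the encoding of the initial counter value), and satisfies $|t(\atla)|=O(|\atla|)$, a fixed $\CTLs$ formula is sent to a fixed $\qATLs$ formula; hence the data-complexity bounds of Theorem~\ref{theo:twoexp} transfer and give the $\pspace$ and $\expspace$ upper bounds for $\CTLs$ on OCPs and SOCPs.

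For the lower bounds I would invoke that $\CTL$ is a syntactic fragment of $\CTLs$: the fixed $\CTL$ formulas witnessing $\pspace$-hardness of $\CTL$ model-checking on OCPs \cite{GL13} and $\expspace$-hardness on SOCPs \cite{GHOW10} are, verbatim, fixed $\CTLs$ formulas, so the very same reductions establish $\pspace$-hardness of $\CTLs$ model-checking on OCPs and $\expspace$-hardness on SOCPs. Combining the two directions yields the stated $\pspace$- and $\expspace$-completeness.

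I do not expect a real obstacle. The only points requiring care are checking that $\coal{\agt}$ and $\coal{\emptyset}$ faithfully render the existential and universal path quantifiers in the one-player setting — this rests on every strategy in a one-player game fixing a single play, which also ensures that nested quantifiers are handled correctly — and noting that the translation is size-preserving up to a constant factor, which is precisely what makes the argument work at the level of data (fixed-formula) complexity.
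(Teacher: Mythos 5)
Your proposal is correct and follows essentially the same route as the paper: the paper obtains the upper bounds by viewing $\CTLs$ model-checking on OCPs/SOCPs as the one-player special case of $\qATLs$ model-checking covered by Theorem~\ref{theo:twoexp} (data complexity $\pspace$/$\expspace$), and the lower bounds by citing the fixed-formula $\pspace$-hardness of $\CTL$ on OCPs \cite{GL13} and $\expspace$-hardness on SOCPs \cite{GHOW10}, exactly as you do. Your explicit translation of $E$ and $A$ into $\coal{\agt}$ and $\coal{\emptyset}$ merely spells out the embedding the paper leaves implicit.
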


Since our lower bound is for formulas of the form $\coal{\{\textup{I}\}} \atla$ where $\atla$ is an $\LTL$ formula and $\textup{I}$ is a player we also have the following.

\begin{corollary}
 
 Deciding the winner in two-player OCGs and SOCGs with $\LTL$ objectives are both $\twoexpspace$-complete.
\end{corollary}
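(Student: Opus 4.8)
The plan is to observe that this corollary is essentially a one-operator restatement of Theorem~\ref{theo:twoexp}. First I would pin down the correspondence between the two problems. Given a two-player OCG (or SOCG) $\gameb$ with players $\{\textup{I},\textup{II}\}$, a designated initial state $s_0$, an initial counter value $i$, and an $\LTL$ objective $\atla$ over the atomic propositions, ``player $\textup{I}$ wins'' means by definition that there is a strategy $\sigma \in \strat^{\{\textup{I}\}}_\gameb$ such that every play in $\play((s_0,i),\sigma)$ satisfies $\atla$. Unfolding the semantics of $\coal{A}$, this is exactly $\gamea, s_0, i \models \coal{\{\textup{I}\}}\atla$ for the game model $\gamea$ obtained by equipping $\gameb$ with the proposition set and labelling referenced by $\atla$; and $\coal{\{\textup{I}\}}\atla$ is a genuine $\qATLs$ state formula, since $\atla$ is a path formula with no strategic operators. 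The unary (resp.\ binary) input convention for the initial counter value in OCGs (resp.\ SOCGs) matches the convention used for model-checking, and if one insists on the ``counter starts at $0$'' convention one simply prepends the gadget of Figure~\ref{fig:casei} to force the value up to $i$ at no asymptotic cost.

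For the upper bound, the correspondence above turns the game-winner problem directly into a $\qATLs$ model-checking instance read with combined-complexity parameters, so Theorem~\ref{theo:twoexp} yields a $\twoexpspace$ procedure for both OCGs and SOCGs. Concretely this amounts to constructing the model-checking game $\gameb_{\gamea,s_0,i}(\coal{\{\textup{I}\}}\atla)$, which is a one-counter parity game of doubly-exponential size, and applying Proposition~\ref{prop:ocg_parity}, with the extra exponential blowup for binary weights absorbed exactly as in the $\qATL$ case.

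For the lower bound, the key observation is that the $\twoexpspace$-hardness reduction underlying the combined-complexity bound of Theorem~\ref{theo:twoexp} (Appendix~\ref{app:lower}) already outputs, on input a word and a doubly-exponential space Turing machine, a model-checking instance whose formula has precisely the shape $\coal{\{\textup{I}\}}\atla$ with $\atla \in \LTL$ and $\textup{I}$ a single player. Via the correspondence of the first paragraph, that very reduction is therefore also a polynomial-time reduction from the word acceptance problem of doubly-exponential space Turing machines to ``player $\textup{I}$ wins the given two-player OCG with $\LTL$ objective $\atla$''. Hence the OCG problem is $\twoexpspace$-hard, and since every OCG is trivially an SOCG the hardness transfers to SOCGs as well; combined with the upper bound this gives $\twoexpspace$-completeness in both cases.

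I do not expect a genuine obstacle here beyond bookkeeping: the only thing to double-check is that the construction in Appendix~\ref{app:lower} really produces an honest two-player game together with an $\LTL$ objective, so that the ``who wins the game'' reading coincides with the ``is $\coal{\{\textup{I}\}}\atla$ satisfied'' reading rather than being meaningful only through the model-checking lens. Since that construction is explicitly phrased as targeting a formula $\coal{\{\textup{I}\}}\atla$ with $\atla$ an $\LTL$ formula, this is immediate once one records the equivalence of the two formulations spelled out above.
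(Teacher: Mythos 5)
Your proposal is correct and matches the paper's own (very brief) justification: the upper bound is the combined-complexity bound of Theorem~\ref{theo:twoexp} applied to the formula $\coal{\{\textup{I}\}}\atla$, and the lower bound is exactly the observation that the reduction in Appendix~\ref{app:lower} already produces a formula of that single-strategic-operator form, so it doubles as a hardness proof for the game-winner problem, with SOCGs inheriting hardness as a superclass of OCGs. No further comment needed.
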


\section{Concluding remarks}
\label{sec:concluding}
We have characterized the complexity of the quantitative alternating-time temporal logics $\qATL$ and $\qATLs$ with respect to the format of edge weights as well as whether the input formula is fixed or not. The results are collected in Table \ref{tab:results}. Note that all complexity results on $\qATL$ and $\qATLs$ hold for $\ATL$ and $\ATLs$ as well since no counter constraints are used in the proofs of the lower bounds. As a byproduct we have also obtained results for $\CTLs$ model-checking on OCPs. These, along with $\CTL$ model-checking results on OCPs and SOCPs from the litterature, are included as a comparison.

\begin{table}[ht]
\caption{Complexity results of model-checking. Results for $\qATL$ and $\qATLs$ are on OCGMs and SOCGMs whereas results for $\CTL$ and $\CTLs$ are for OCPs and SOCPs respectively}
\begin{center}
\begin{tabular}{l | c c | c c}
 & \multicolumn{2}{c}{Non-succinct} & \multicolumn{2}{c}{Succinct}\\
\hline
 & Data & Combined & Data & Combined\\
\hline
$\qATL$ & $\pspace$-c & $\pspace$-c & $\expspace$-c & $\expspace$-c \\
$\qATLs$ & $\pspace$-c & $\twoexpspace$-c & $\expspace$-c & $\twoexpspace$-c \\
$\CTL$ & $\pspace$-c \cite{GL13} & $\pspace$-c \cite{GL13} & $\expspace$-c \cite{GHOW10} & $\expspace$-c \cite{GHOW10} \\
$\CTLs$ & $\pspace$-c & In $\twoexptime$ \cite{EKS01} & $\expspace$-c & In $\twoexpspace$
\end{tabular}
\end{center}
\label{tab:results}
\end{table}

Given that one-counter reachability games are already $\pspace$-complete \cite{BJK10} it is very positive that we can extend to $\qATL$ model-checking and even to model-checking of fixed $\qATLs$ formulas without leaving $\pspace$. Model-checking $\CTL$ in SOCPs is already $\expspace$-complete \cite{GHOW10} so it is also very positive that we can extend this to model-checking of $\qATL$ and fixed formulas of $\qATLs$ in succinct one-counter games. Finally, the $\twoexpspace$-completeness results are not too unexpteced compared to the known $\twoexptime$ lower bound from the synthesis of $\LTL$ \cite{PR89a} and $\threeexptime$-completeness of pushdown games with $\LTL$ objectives \cite{LMS04}. However, though we restrict to a unary stack alphabet compared to pushdown games, we do have counter constraints and nesting of strategic operators.

Finally, the model-checking game approach has turned out to be quite flexible with respect to enriching the alternating-time temporal logics with counter constraints. This is also the case when dealing with infinite state-spaces in which labelling of states with formulas that are true is not so straightforward. In addition, it has given us optimal complexity for most of the problems considered. We leave the combined complexity of $\CTLs$ model-checking open.

\subparagraph*{Acknowledgements}

I want to thank Valentin Goranko for discussions and helpful comments.

\appendix
\section{Full proof of Proposition \ref{prop:reduc}}

\reduc*

\begin{proof} 

The proof is done by induction on the structure of $\atla$. First, we consider the base cases.
\\

$\atla = p: $ In this case $\pa$ has a winning strategy if and only if $p \in L(s)$ if and only if $\gamea, s, i \models p$.
\\

$\atla = (r < c): $ In this case the counter is initially increased to $i$ after $i$ steps of the game. Then, $\pb$ can win exactly if he can decrease the counter $c-1$ times which is possible if and only if $c < i$. By the semantics of $\qATL$ this is exactly the case when $\gamea, s, i \models r < c$.
\\

$\atla = (r \leq c): $ The argument is similar to the case above.
\\

$\atla = (r \equiv_k c): $ In this case, $\pa$ has a winning strategy in $\gameb_{\gamea, s, i}(r \equiv_k c)$ if and only if he has a winning strategy where he subtracts one from the counter every time he can. The same is the case for $\pb$. For $\pa$ this is a winning strategy exactly when $\gamea, s, i \models (r \equiv_k c)$ if and only if $i \equiv_k c$. The reason is that after subtracting from the counter $i$ times, the current state will be $u_{k-1}$ if and only if
\\

$k-1 \equiv c-1 - i \textup{ (mod } k)$
\\

$\Leftrightarrow k \equiv c - i \textup{ (mod } k)$
\\

$\Leftrightarrow i \equiv c \textup{ (mod } k) \Leftrightarrow i \equiv_k c$
\\
\\
Next, we consider the inductive cases.
\\

$\atla = \atla_1 \vee \atla_2: $ Clearly, if $\pa$ has a winning strategy in $\gameb_{\gamea, s, i}(\atla_1)$ or in $\gameb_{\gamea, s, i}(\atla_2)$ then he has a winning strategy in $\gameb_{\gamea, s, i}(\atla_1 \vee \atla_2)$ since he can choose which of the games to play and reuse the winning strategy. On other hand, if $\pa$ has a winning strategy in $\gameb_{\gamea, s, i}(\atla_1 \vee \atla_2)$ then he is either winning in $\gameb_{\gamea, s, i}(\atla_1)$ or in $\gameb_{\gamea, s, i}(\atla_2)$ because he can reuse the strategy and be sure to win in at least one of these games. Then, by using the induction hypothesis we have that $\pa$ has a winning strategy in $\gameb_{\gamea, s, i}(\atla_1 \vee \atla_2)$ if and only if he has a winning strategy in $\gameb_{\gamea, s, i}(\atla_1)$ or in $\gameb_{\gamea, s, i}(\atla_2)$ if and only if $\gamea, s, i \models \atla_1$ or $\gamea, s, i \models \atla_2$ if and only if $\gamea, s, i \models \atla_1 \vee \atla_2$.
\\

$\atla = \neg \atla_1: $ The construction essentially switches $\pa$ with $\pb$ when creating $\gameb_{\gamea, s, i}(\neg \atla_1)$ from $\gameb_{\gamea, s, i}(\atla_1)$. This means that $\pa$ has a winning strategy in $\gameb_{\gamea, s, i}(\neg \atla_1)$ if and only if $\pb$ has a winning strategy in $\gameb_{\gamea, s, i}(\atla_1)$. As a consequence of the determinacy result for Borel games \cite{Mar75} we have that one-counter games with parity conditions are determined. It follows that $\pa$ has a winning strategy in $\gameb_{\gamea, s, i}(\neg \atla_1)$ if and only if $\pa$ does not have a winning strategy in $\gameb_{\gamea, s, i}(\atla_1)$. Using the induction hypothesis this means that $\pa$ has a winning strategy in $\gameb_{\gamea, s, i}(\neg \atla_1)$ if and only if $\gamea, s, i \not \models \atla_1$ if and only if $\gamea, s, i \models \neg \atla_1$.
\\

$\atla = \coal{A} \atlx \atla_1: $ There are two cases to consider. First, suppose $s \in S_j$ for some $j \in A$. Then $\pa$ has a winning strategy in $\gameb_{\gamea, s, i}(\coal{A} \atlx \atla_1)$ if and only if there is a transition $(s,v,s') \in R$ with $v+i \ge 0$ such that $\pa$ has a winning strategy in $\gameb_{\gamea, s', i+v}(\atla_1)$ since parity objectives are prefix independent. Using the induction hypothesis, this is the case if and only if there is a transition $(s,v,s') \in R$ with $v+i \ge 0$ such that $\gamea, s',i+v \models \atla_1$ which is the case if and only if $\gamea, s, i \models \coal{A} \atlx \atla_1$. For the case where $s \not \in S_j$ for all $j \in A$ the proof is similar, but uses universal quantification over the transitions.
\\

$\atla = \coal{A} \atlg \atla_1: $ The intuition of the construction is that $\pa$ controls the players in $A$ and $\pb$ controls the players in $\agt \setminus A$. At each configuration $(s',v) \in S \times \nats$ of the game $\pb$ can challenge the truth value of $\atla_1$ by going to $\gameb_{\gamea, s', v}(\atla_1)$ in which $\pb$ has a winning strategy if and only if $\atla_1$ is indeed false in $\gamea, s', v$. If $\pb$ challenges at the wrong time or never challenges then $\pa$ can make sure to win.

More precisely, suppose $\pa$ has a winning strategy $\sigma$ in $\gameb_{\gamea, s, i}(\coal{A} \atlg \atla_1)$ then every possible play when $\pa$ plays according to $\sigma$ either never goes into one of the modules $\gameb_{\gamea, s'}(\atla_1)$ or the play goes into one of the modules at some point and never returns. Since $\sigma$ is a winning strategy for I, we have by the induction hypothesis that every pair $(s',v) \in S \times \nats$ reachable when $\pa$ plays according to $\sigma$ is such that $\gamea, s', v \models \atla_1$, because otherwise $\sigma$ would not be a winning strategy for I. If coalition $A$ follows the same strategy $\sigma$ adapted to $\gamea$ then the same state, value pairs are reachable. Since for all these reachable pairs $(s',v)$ we have $\gamea, s', v \models \atla_1$ this strategy is a witness that $\gamea, s, i \models \coal{A} \atlg \atla_1$.

On the other hand, suppose that coalition $A$ can ensure $\atlg \atla_1$ from $(s,i)$ using strategy $\sigma$. Then in every reachable configuration $(s',v)$ we have $\gamea, s', v \models \atla_1$. From this we can generate a winning strategy for $\pa$ in $\gameb_{\gamea, s, i}(\coal{A} \atlg \atla_1)$ that plays in the same way until (if ever) $\pb$ challenges and takes a transition to a module $\gameb_{\gamea, s', v}(\atla_1)$ for some $(s',v)$. Since the same configurations can be reached before a challenge as when $A$ plays according to $\sigma$, this means that $\pa$ can make sure to win in $\gameb_{\gamea, s', v}(\atla_1)$ by the induction hypothesis. Thus, if $\pb$ challenges $\pa$ can make sure to win and if $\pb$ never challenges $\pa$ also wins since all states reached have color 0. Thus, $\pa$ has a winning strategy in $\gameb_{\gamea, s, i}(\coal{A} \atlg \atla_1)$.
\\

$\atla = \coal{A} \atla_1 \atlu \atla_2: $ The proof works as the case above with some minor differences. In this case, $\pa$ needs to show that he can reach a configuration where $\atla_2$ is true when controlling the players in $A$ and therefore he loses if he can never reach a module $\gameb_{\gamea, s', v}(\atla_2)$ such that $\gamea, s', v \models \atla_2$. At the same time, he has to make sure that configurations $(s',v)$ where $\gamea, s', v \not \models \atla_1$ are not reached in an intermediate configuration since $\pb$ still has the ability to challenge, as in the previous case. Note that $\pa$ gets the chance to commit to showing that $\atla_2$ is true in a given configuration before $\pb$ gets the change to challenge the value of $\atla_1$. This is due to the definition of the until operator that does not require $\atla_1$ to be true at the point where $\atla_2$ becomes true. We leave out the remaining details.

\end{proof}

\section{Full proof of Theorem \ref{theo:twoexp}}
 
\label{app:lower}
 
We will show that model-checking $\ATLs$ in OCGMs is $\twoexpspace$-hard by a reduction from the word acceptance problem for a deterministic doubly-exponential space Turing machine. From this, the theorem follows from the observations in the main text.

Let $\tm = (Q, q_0, \Sigma, \delta, q_F)$ be a deterministic Turing machine that uses at most $2^{2^{|w|^k}}$ tape cells on input $w$ where $k$ is a constant and $|w|$ is the number of symbols in $w$. Here, $Q$ is a finite set of control states, $q_0 \in Q$ is the initial control state.  $\Sigma = \{0,1,\#, a, r\}$ is the tape alphabet containing the blank symbol $\#$ and special symbols $a$ and $r$ such that $\tm$ accepts immediately if it reads $a$ and rejects immediately if it reads $r$, $\delta: Q \times \Sigma \rightarrow Q \times \Sigma \times \{\dirleft, \dirright \}$ is the transition function and $q_F \in Q$ is the accepting state. If $\delta(q,a) = (q',a',x)$ we write $\delta_1(q,a) = q', \delta_2(q,a) = a'$ and $\delta_3(q,a) = x$. Let $\Sigma_I = \Sigma \setminus \{\#\}$. Now, let $w = w_1 ... w_{|w|} \in \Sigma_I^*$ be an input word. From this we construct an OCGM $\gamea$, an initial state $s_0$ and a $\qATLs$ formula $\atlsa$ all with size polynomial in $n = |w|^k$ and $|\tm|$ such that $\tm$ accepts $w$ if and only if $\gamea, (s_0,0) \models \atlsa$.

We use an intermediate step in the reduction for simplicity of the arguments. This is done by considering an OCG $\gameb = (S', \{\pa,\pb\}, (S'_\pa, S'_\pb), R')$ with two players $\pa$ and $\pb$ and an initial state $s'_0$ such that $\pa$ can force the play to reach $s'_F$ if and only if $\tm$ accepts $w$. However, the size of the set $S'$ of states will be doubly-exponential in $n$. The idea of this construction resembles a reduction from the word acceptance problem for polynomial-space Turing machines to the emptiness problem for alternating finite automata with a singleton alphabet used in \cite{JS07}. Afterwards we will reduce this to model-checking of the $\ATLs$ formula $\atlsa$ in $\gamea$ where $|S|$ is polynomial in $n$. This reduction can be performed by considering a more involved formula. We will use a technique similar to those used in \cite{KMTV00} and \cite{BMP05} to simulate a $2^{n}$-bit counter by using $\LTL$ properties and alternation between the players. This is the main trick to keep the state-space of $\gamea$ small.

We start with some notation. We assume that $\tm$ uses the tape cells numbered $1,...,2^{2^n}$ and that the tape head points to position $1$ initially. In addition, suppose for ease of arguments that there are two extra tape cells numbered $0$ and $2^{2^n}+1$ such that $\tm$ immediately accepts if the tape head reaches cell $0$ or cell $2^{2^n} + 1$. That is, cell $0$ and $2^{2^n} + 1$ holds the symbol $a$ initially. Further, assume without loss of generality that if $\tm$ halts it always does so with the tape head pointing to cell $1$ that contains the symbol $a$. Since $\tm$ is deterministic it has a unique (finite or infinite) run on the word $w$ which is a sequence $C^w_0 C^w_1 ...$ of configurations. Let $\Delta = \Sigma \cup (Q \times \Sigma)$. Then each configuration $C^w_i$ is a sequence in $\Delta^{2^{2^n} + 2}$ containing exactly one element in $Q \times \Sigma$ which is used to specify the current control state and location of the tape head. For instance, the initial configuration $C^w_0$ is given by

$$C^w_0 = a (q_0, w_1) w_2 w_3 ... w_{|w|} \# \# .... \# a$$

We use $C^w_i(j)$ to denote the $j$th element of configuration $C^w_i$. For a given element $d \in \Delta$ we define the set $\pred(d)$ of predecessor triples of $d$ as
\\

\begin{tabular}{r l}

$\pred(d) =$ & $\{(d_1,d_2,d_3) \in \Sigma^3 \mid d_2 = d \}$\\
 & $\cup \{((q,b),d_2,d_3) \in (Q \times \Sigma) \times \Sigma^2 \mid d = (\delta_1(q,b), d_2) \textup{ and } \delta_3(q,b) = \dirright \}$\\
 & $\cup \{((q,b),d_2,d_3) \in (Q \times \Sigma) \times \Sigma^2 \mid d = d_2 \textup{ and } \delta_3(q,b) \neq \dirright \} $\\
 & $\cup \{(d_1,d_2,(q,b)) \in \Sigma^2 \times (Q \times \Sigma) \mid d = (\delta_1(q,b), d_2) \textup{ and } \delta_3(q,b) = \dirleft \}$\\
 & $\cup \{(d_1,d_2,(q,b)) \in \Sigma^2 \times (Q \times \Sigma) \mid d = d_2 \textup{ and } \delta_3(q,b) \neq \dirleft \} $\\
 & $\cup \{(d_1,(q,b),d_3) \in \Sigma \times (Q \times \Sigma) \times \Sigma \mid d = \delta_2(q,b) \}$
\end{tabular}
\\

The idea is that given the three elements $C^w_i(j-1), C^w_i(j)$ and $C^w_i(j+1)$ one can uniquely determine $C^w_{i+1}(j)$ according to the definition of a Turing machine. $\pred(d)$ is then the set of all triples $(d_1,d_2,d_3)$ such that it is possible to have $C^w_i(j-1) = d_1, C^w_i(j) = d_2, C^w_i(j+1) = d_3$ and $C^w_{i+1}(j) = d$.

We now define the OCG $\gameb = ((S', \{\pa,\pb\}, (S'_\pa, S'_\pb), R'))$ by

\begin{itemize}

\item $S' = (\{0,...,2^{2^n} + 1 \} \times (\Delta \cup \Delta^3)) \cup \{s'_0, s'_z, s'_r, s'_F\}$

\item $S'_\pa = (\{0,...,2^{2^n} + 1 \} \times \Delta) \cup \{s'_0\}$

\item $S'_\pb = (\{0,...,2^{2^n} + 1 \} \times \Delta^3) \cup \{s'_z, s'_r, s'_F\}$
 
\item $R'$ is the least relation such that

\begin{itemize}

\item $(s'_0, 1, s'_0) \in R'$

\item $(s'_0, 0, (1, (q_F, a))) \in R'$

\item $((j,d), 0, (j,(d_1,d_2,d_3))) \in R'$ for all $j \in \{1,...,2^{2^n}\}$ and all $(d_1,d_2,d_3) \in \pred(d)$

\item For $j \in \{0,2^{2^n}+1\}$ we have $((j,a), 0, s'_F) \in R'$ and $((j,d),0,s'_r) \in R'$ when $d \neq a$

\item $((j,d), 0, s'_z) \in R'$ for all $(j,d)$ such that $C^w_0(j) = d$.

\item $(s'_z,0,s'_F) \in R'$

\item $(s'_z,-1,s'_r) \in R'$

\item $((j,(d_1,d_2,d_3)), -1, (j-1,d_1)) \in R'$ for all $j \in \{1,...,2^{2^n}\}$ and all $d_1,d_2,d_3 \in \Delta$

\item $((j,(d_1,d_2,d_3)), -1, (j,d_2)) \in R'$ for all $j \in \{1,...,2^{2^n}\}$ and all $d_1,d_2,d_3 \in \Delta$

\item $((j,(d_1,d_2,d_3)), -1, (j+1,d_3)) \in R'$ for all $j \in \{1,...,2^{2^n}\}$ and all $d_1,d_2,d_3 \in \Delta$

\end{itemize}

\end{itemize}

The different types of transitions are shown in Figure \ref{fig:2explower_init}, \ref{fig:2explower_pa} and \ref{fig:2explower_pb}. The intuition is that $\pa$ tries to show that $\tm$ accepts $w$ and $\pb$ tries to prevent this. Initially, $\pa$ can increase the counter to any natural number, assume he chooses $v$. If $\tm$ accepts $w$ it does so in a final configuration with the tape head pointing at cell $1$ holding the symbol $a$ with the current control state $q_F$. The game is now played by moving backwards from the state $(1,(q_F,a))$ holding this information. $\pa$ can choose a predecessor triple that leads to $(1,(q_F,a))$. Player $\pb$ then chooses one of the elements of the triple, the counter is decreased by one and the play continues like this. Finally, if the counter is $0$ in a state $(j,d)$ such that $C^w_0(j) = d$ then $\pa$ can win by going to $s'_z$ from which $\pb$ can only go to $s'_F$. We will argue that $\pa$ can make sure that this happens if and only if $\tm$ accepts $w$ after performing $v$ steps.

\begin{figure}
\begin{center}
\begin{tikzpicture}

\tikzstyle{every node}=[ellipse, draw=black, fill=none,
                        inner sep=5pt, minimum width=15pt, minimum height=15pt]

\draw (2,10) node [label = below:${s'_0}$] (s0)	{};
\draw (5,10) node [label = below:${(1,(q_F,a))}$] (s1)	{};

\path[->] (s0) edge [loop above] node [above, draw=none] {+1} (s0);
\path[->] (s0) edge node [above, draw=none] {0} (s1);

\end{tikzpicture}
\end{center}
\caption{From the initial state, $\pa$ can increase the counter to any natural number before starting the game.}
\label{fig:2explower_init}

\end{figure}
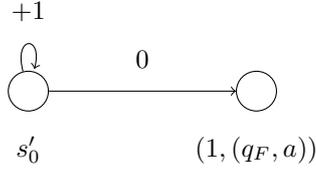

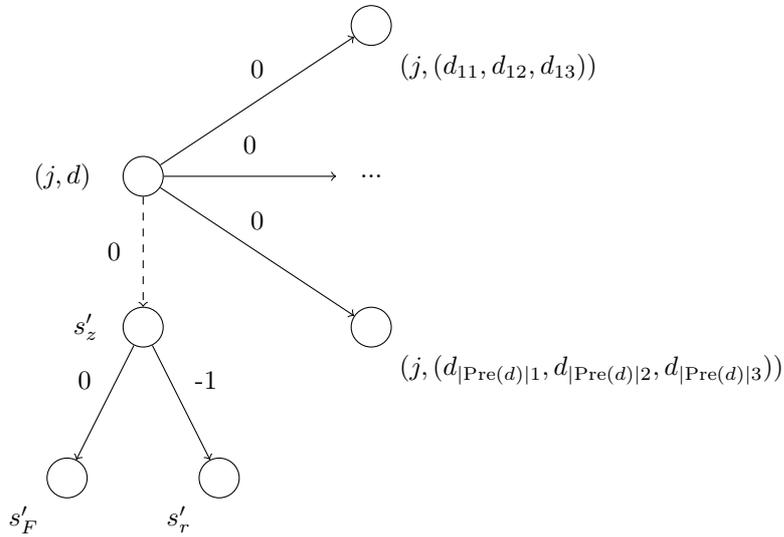
\begin{figure}
\begin{center}
\begin{tikzpicture}

\tikzstyle{every node}=[ellipse, draw=black, fill=none,
                        inner sep=5pt, minimum width=15pt, minimum height=15pt]

\draw (2,8) node [label = left: ${(j, d)}$] (s0)	{};

\draw (5,10) node [label = below right: ${(j,(d_{11},d_{12},d_{13}))}$] (s1)	{};
\draw (5,8) node [draw=none] (s2)	{...};
\draw (5,6) node [label = below right: ${(j,(d_{|\pred(d)|1},d_{|\pred(d)|2},d_{|\pred(d)|3}))}$] (s3)	{};

\draw (2,6) node [label = left: $s'_z$] (sz)	{};

\draw (1,4) node [label = below left: $s'_F$] (sf)	{};
\draw (3,4) node [label = below left: $s'_r$] (sr)	{};

\path[->] (s0) edge node [above, draw=none] {0} (s1);
\path[->] (s0) edge node [above, draw=none] {0} (s2);
\path[->] (s0) edge node [above, draw=none] {0} (s3);
\path[->] (s0) edge [dashed] node [left, draw=none] {0} (sz);
\path[->] (sz) edge node [above left, draw=none] {0} (sf);
\path[->] (sz) edge node [above right, draw=none] {-1} (sr);

\end{tikzpicture}
\end{center}
\caption{From a state $(j,d) \in \{1,...,2^{2^n} \} \times \Delta$ $\pa$ can choose a predecessor triple of $d$. The dashed transition is enabled only when $C^w_0(j) = d$. In this case $\pa$ can be sure to win if the current counter value is $0$.}
\label{fig:2explower_pa}

\end{figure}

\begin{figure}
\begin{center}
\begin{tikzpicture}

\tikzstyle{every node}=[ellipse, draw=black, fill=none,
                        inner sep=5pt, minimum width=15pt, minimum height=15pt]

\draw (2,8) node [label = below left: ${(j, (d_1,d_2,d_3))}$, rectangle] (s0)	{};

\draw (5,10) node [label = right: ${(j-1,d_1)}$] (s1)	{};
\draw (5,8) node [label = right: ${(j,d_2)}$] (s2)	{};
\draw (5,6) node [label = right: ${(j+1,d_3)}$] (s3)	{};

\path[->] (s0) edge node [above, draw=none] {-1} (s1);
\path[->] (s0) edge node [above, draw=none] {-1} (s2);
\path[->] (s0) edge node [above, draw=none] {-1} (s3);

\end{tikzpicture}
\end{center}
\caption{From a precedessor triple chosen by $\pa$, $\pb$ can choose which predecessor to continue with.}
\label{fig:2explower_pb}

\end{figure}

\begin{lemma}
 The configuration $((j,d),i) \in (\{1,...,2^{2^n}\} \times \Delta) \times \nats$ is winning for $\pa$ if and only if $C^w_i(j) = d$. In particular $((1,(q_F,a)),i)$ is winning for $\pa$ if and only if $C^w_i(1) = (q_F,a)$ if and only if $\tm$ accepts $w$ after $i$ steps of computation.
 
\end{lemma}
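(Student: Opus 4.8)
The plan is to prove the two directions of the equivalence $((j,d),i)$ is winning for $\pa$ iff $C^w_i(j) = d$ by a double induction: an outer induction on $i$ (the step number of the Turing machine run) and, for the direction showing a configuration is \emph{not} winning, an inner argument about what $\pb$ can do. I would first record the key structural fact about the game graph $\gameb$: from a state $(j,d)$ with $j \in \{1,\dots,2^{2^n}\}$, $\pa$ must either move to some predecessor-triple state $(j,(d_1,d_2,d_3))$ with $(d_1,d_2,d_3)\in\pred(d)$, possibly decrementing the counter along the way via $\pb$'s choice at the triple state, or — only if $C^w_0(j)=d$ — move to $s'_z$, from which $\pa$ wins iff the counter is still $0$ (if the counter is positive, $\pb$ goes to $s'_r$ and wins; $s'_F$ is the only sink that is winning for $\pa$). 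I would also note that every play that never reaches $s'_F$ is losing for $\pa$, and that each ``backward step'' (from a $(j,d)$ state through a triple state to a new $(j',d')$ state) decrements the counter by exactly $1$.

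For the direction ``$C^w_i(j)=d$ implies $((j,d),i)$ winning for $\pa$'': I would induct on $i$. For $i=0$, since $C^w_0(j)=d$, $\pa$ moves directly to $s'_z$ and, as the counter equals $0$, $\pb$ is forced to $s'_F$; so $\pa$ wins. For $i>0$, by the defining property of $\pred$ (stated in the excerpt: the triple $C^w_{i-1}(j-1),C^w_{i-1}(j),C^w_{i-1}(j+1)$ uniquely determines $C^w_i(j)$, hence that triple lies in $\pred(d)$), $\pa$ moves to the triple state $(j,(C^w_{i-1}(j-1),C^w_{i-1}(j),C^w_{i-1}(j+1)))$. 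Whichever of the three components $\pb$ selects, the play lands on a state $(j',d')$ with $d'=C^w_{i-1}(j')$ and the counter is now $i-1$; by the induction hypothesis that configuration is winning for $\pa$. (At the boundary cells $j\in\{0,2^{2^n}+1\}$ one uses the special transitions: these always hold $a$ in every reachable configuration, so $\pa$ goes to $s'_F$; I would fold this into the base case / a separate easy observation.) Concatenating the one move with the inductively guaranteed winning strategy yields a winning strategy for $\pa$.

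For the converse ``$C^w_i(j)\neq d$ implies $((j,d),i)$ not winning for $\pa$'', I would again induct on $i$ and argue that $\pb$ has a winning strategy. If $i=0$ and $d\neq C^w_0(j)$, then the dashed transition to $s'_z$ is \emph{not} available, so $\pa$ must move to some triple state; but whatever triple $(d_1,d_2,d_3)\in\pred(d)$ $\pa$ picks, it cannot match $(C^w_0(j-1),C^w_0(j),C^w_0(j+1))$ componentwise at the middle coordinate at least — more carefully, I would show by downward reasoning that $\pb$ can always steer to a component $(j',d')$ with $d'\neq C^w_{i'}(j')$ for the decremented index, and that after finitely many such steps the counter hits $0$ at a state $(j',d')$ with $d'\neq C^w_0(j')$, at which point $\pa$ has no move to $s'_z$ and the play either continues forever among non-$s'_F$ states or, if $\pa$ tries $s'_z$ illegally — it is not available — so $\pa$ is simply stuck at a state with no winning continuation; alternatively if the counter is already $0$ and $d\ne C^w_0(j)$, $\pa$ cannot reach $s'_z$ and hence cannot reach $s'_F$. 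If $i>0$ and $d\neq C^w_i(j)$: if $\pa$ goes to $s'_z$ (only possible if $C^w_0(j)=d$, and the counter is $i>0$), $\pb$ replies with the $-1$ edge to $s'_r$ and wins; if $\pa$ goes to a triple state $(j,(d_1,d_2,d_3))$ with $(d_1,d_2,d_3)\in\pred(d)$, then since $(d_1,d_2,d_3)\ne(C^w_{i-1}(j-1),C^w_{i-1}(j),C^w_{i-1}(j+1))$ (because $\pred$ determines the middle value, so equality of the triple would force $d=C^w_i(j)$), there is an index $\ell\in\{-1,0,+1\}$ with $d_{\ell}\neq C^w_{i-1}(j+\ell)$; $\pb$ picks that successor, reaching $(j+\ell,d_\ell)$ with counter $i-1$, which by the induction hypothesis is not winning for $\pa$, and $\pb$ continues with the corresponding strategy. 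I expect the main obstacle to be exactly this case analysis in the converse direction: one must be careful that ``$\pred$ uniquely determines $C^w_{i}(j)$'' gives, contrapositively, a \emph{mismatched coordinate} for $\pb$ to exploit, and one must handle the boundary cells $0$ and $2^{2^n}+1$ and the counter-exhaustion corner cases so that every $\pb$-play provably avoids $s'_F$. The final sentence of the lemma — specialization to $((1,(q_F,a)),i)$ and the equivalence with ``$\tm$ accepts $w$ after $i$ steps'', using the normalization that $\tm$ halts with head on cell $1$ holding $a$ in state $q_F$ — then follows immediately from the proved equivalence and the uniqueness of the deterministic run.
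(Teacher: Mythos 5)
Your proposal is correct and follows essentially the same route as the paper: induction on $i$, with \pa{} winning from a correct configuration by selecting the true predecessor triple $(C^w_{i-1}(j-1),C^w_{i-1}(j),C^w_{i-1}(j+1))$, and the converse resting on the fact that all three components of any chosen triple must be correct (you phrase this contrapositively, letting \pb{} exploit a mismatched coordinate, which is equivalent to the paper's direct use of the induction hypothesis). Your explicit treatment of the boundary cells and the counter-zero deadlock is a harmless elaboration of details the paper leaves implicit.
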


\begin{proof}

The proof is done by induction on $i$. For the base case $i = 0$ the statement says that $((j,d),0)$ is winning for $\pa$ if and only if $C^w_0(j) = d$. Indeed, if $((j,d),0)$ is winning for $\pa$ he must go directly from $(j,d)$ to $s'_z$ because all other paths are blocked after one step since the counter value is $0$. If he goes to $s'_z$ then he wins because $\pb$ can only go to $s'_F$. However, note that there is only a transition from $(j,d)$ to $s'_z$ if $C^w_0(j) = d$ by construction. Thus, if $\pa$ is winning from $((j,d),0)$ then $C^w_0(j) = d$. For the other direction, suppose $C^w_0(j) = d$. Then $\pa$ can make sure to win by going to $s'_z$.

For the induction step, suppose the lemma is true for $i$. Now we need to show that $((j,d),i+1)$ is winning for $\pa$ if and only if $C^w_{i+1}(j) = d$. Suppose first that $((j,d),i+1)$ is winning for $\pa$. The winning strategy $\sigma$ cannot consist in going directly to $s'_z$ because then $\pb$ can go to $s'_r$. Thus, $\pa$ must choose a predecessor triple $(d_1,d_2,d_3) \in \pred(d)$ when playing according to $\sigma$. After he chooses this, $\pb$ chooses one of them and the counter is decreased by one. Thus, $\pb$ can choose either $((j-1,d_1),i)$, $((j,d_2),i)$ or $(j+1,d_3),i)$. Thus, by the induction hypothesis $C^w_i(j-1) = d_1$, $C^w_i(j) = d_2$ and $C^w_i(j+1) = d_3$ since $\pa$ is winning. By the definition of predecessor triples, this means that $C^w_{i+1}(j) = d$. For the other direction, suppose $C^w_{i+1}(j) = d$. Then by going to the state $(j,(C^w_i(j-1), C^w_i(j), C^w_i(j+1)))$ he can be sure to win by the induction hypothesis.

\end{proof}

\begin{lemma}
 Starting in configuration $(s'_0,0)$ $\pa$ can make sure to reach $s'_F$ if and only if $\tm$ accepts $w$.
\end{lemma}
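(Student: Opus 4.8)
The plan is to bootstrap the previous lemma through the initial gadget. The key observation is that from $(s'_0,0)$ the only genuine decision $\pa$ faces is how many times to traverse the self-loop $(s'_0,1,s'_0)$ — each traversal incrementing the counter — before taking the edge $(s'_0,0,(1,(q_F,a)))$; and since $s'_0 \in S'_\pa$, $\pb$ never intervenes here. Hence ``$\pa$ can force the play from $(s'_0,0)$ to reach $s'_F$'' is equivalent to ``for some $v \in \nats$, $\pa$ can force the play from $((1,(q_F,a)),v)$ to reach $s'_F$'', i.e. to ``some configuration $((1,(q_F,a)),v)$ is winning for $\pa$'', using that in the reachability game $\gameb$ being winning for $\pa$ just means being able to force a visit to $s'_F$.

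First I would do the ($\Leftarrow$) direction. If $\tm$ accepts $w$, it does so after some finite number $v$ of steps, so by the previous lemma $((1,(q_F,a)),v)$ is winning for $\pa$. The required strategy from $(s'_0,0)$ is then: traverse the self-loop exactly $v$ times (always legal, since it only increments the counter), move to $((1,(q_F,a)),v)$, and from there follow the forcing strategy supplied by the previous lemma; this reaches $s'_F$.

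For ($\Rightarrow$), assume $\pa$ has a strategy $\sigma$ forcing the play from $(s'_0,0)$ to $s'_F$. Under $\sigma$ the play cannot remain at $s'_0$ forever — otherwise $s'_F$ is never reached — so $\sigma$ leaves $s'_0$ after exactly $v \ge 0$ self-loop traversals, landing in $((1,(q_F,a)),v)$; and $\sigma$ still forces $s'_F$ from there, so $((1,(q_F,a)),v)$ is winning for $\pa$, and the previous lemma yields $C^w_v(1) = (q_F,a)$, i.e. $\tm$ accepts $w$ in $v$ steps. Combining the two directions proves the lemma (note $v=0$ is allowed, corresponding to $C^w_0$ already being accepting). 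I expect the only point needing care is this ``cannot loop forever'' step, which relies on the convention that every play of $\gameb$ avoiding $s'_F$ — whether infinite, or finite and stuck at a sink other than $s'_F$ — is losing for $\pa$; once that convention is fixed, everything else is a direct appeal to the previous lemma.
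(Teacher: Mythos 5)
Your proposal is correct and follows exactly the route the paper intends: the paper states this lemma without proof as an immediate consequence of the preceding lemma, namely that $\pa$ uses the self-loop at $s'_0$ to set the counter to some $v$ and then the previous lemma characterizes when $((1,(q_F,a)),v)$ is winning. Your explicit handling of both directions (including the observation that a winning strategy cannot loop at $s'_0$ forever) fills in precisely the omitted details, with no gaps.
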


We have now reduced the word acceptance problem to a reachability game in an OCG $\gameb$ with a doubly-exponential number of states. Due to the structure of $\gameb$ we can reduce this to model-checking the $\ATLs$ formula $\atlsa$ in the OCGM $\gamea$. The difficult part is that we need to store the number of the tape cell that the tape head is pointing at, which can be of doubly-exponential size. The other features of $\gameb$ are polynomial in the input. Note that at each step of the game, the position of the tape head either stays the same, increases by one or decreases by one. This is essential for our ability to encode it using $\ATLs$. We construct $\gamea$ much like $\gameb$ but where the position of the tape head is not present in the set of states. Instead, for each transition in the game between states $s$ and $s'$ we have a module in which $\pa$ encodes the position of the tape head by his choices. At the same time, $\pb$ has the possibility to challenge if $\pa$ has not chosen the correct value of the tape head position. This can be ensured by use of the $\ATLs$ formula $\atlsa = \coal{\{\pa\}} \atla$ where $\atla$ is an $\LTL$ formula. The details of simulating a $2^{n}$-bit counter like this can be obtained from \cite{KMTV00,BMP05}. According to the choices of $\pb$ then $\pa$ must be able to increase, decrease or leave unchanged the position of the tape head. This can be enforced by a formula with a size polynomial in $n$.  Except for having to implement the position of the tape head in this way, the rules of $\gamea$ are the same as for $\gameb$ where $\pa$ needs to show that $\tm$ accepts $w$ by choosing a strategy that ensures reaching a certain state in the game while updating the tape head position correctly. In the end, this means that for the initial state $s_0$ in $\gamea$ corresponding to $s'_0$ in $\gameb$ we get $\gamea, s_0, 0 \models \coal{\{\pa\}} (\atla \wedge \atlf s_{F})$ if and only if $\tm$ halts on $w$. Here we assume that the play also goes to a halting state $s_F$ corresponding to $s'_F$ if $\pb$ challenges the counter value incorrectly.

\twoexp*

\bibliographystyle{plain}
\bibliography{biblio}

\end{document}